\documentclass[aps, notitlepage, pra, twocolumn, nofootinbib, floatfix, superscriptaddress, english, 10pt]{revtex4-1}
\usepackage[cm]{fullpage}
\usepackage[T1]{fontenc}
\usepackage[utf8]{inputenx}
\usepackage{amssymb}
\usepackage{amsmath}
\usepackage{amsthm}
\makeatletter
\def\amsbb{\use@mathgroup \M@U \symAMSb}
\makeatother
\usepackage[mathscr]{euscript}
\usepackage{mathtools}
\usepackage{verbatim}
\usepackage{bbold}
\usepackage{ifthen}
\usepackage{array}
\usepackage{multirow}
\usepackage{placeins}
\usepackage[usenames,dvipsnames,table]{xcolor}
\definecolor{darkred}{RGB}{200, 0, 0}
\definecolor{darkgreen}{RGB}{0, 100, 0}
\definecolor{darkblue}{RGB}{0, 0, 200}
\usepackage{hyperref}
\hypersetup{colorlinks, breaklinks, linkcolor=darkred, urlcolor=darkblue, citecolor=darkgreen}

\newcommand{\nbox}[2][9]{\hspace{#1pt} \mbox{#2} \hspace{#1pt}}
\usepackage{paralist}
\usepackage{gensymb}
\usepackage{aliascnt}
\newtheoremstyle{lemeq}
{}
{}
{}
{}
{\bfseries}
{:}
{\parindent}
{\thmname{#1} \thmnumber{\normalfont{#2}} \thmnote{\normalfont#3}}
\theoremstyle{lemeq}

\newaliascnt{axiom}{theo}
\newtheorem{df}{Definition}[section]
\newtheorem{prop}{Proposition}[section]
\newtheorem{obs}{Observation}[section]

\DeclareMathOperator{\rank}{rank}

\DeclareMathOperator{\tr}{tr}

\DeclareMathOperator{\clos}{Clos}
\DeclareMathOperator{\conv}{Conv}
\def \diracspacing {0.7pt}
\newcommand{\expec}[1]{\langle #1 \rangle}
 
\newcommand{\ket}[1]{| \hspace{\diracspacing} #1 \rangle} 
\newcommand{\braket}[2]{\langle #1 \hspace{\diracspacing} | \hspace{\diracspacing} #2 \rangle} 
 
\newcommand{\ketbra}[2]{| \hspace{\diracspacing} #1 \rangle \langle #2 \hspace{\diracspacing} |} 
 
\newcommand{\ketbraq}[1]{\ketbra{#1}{#1}} 
\newcommand{\bramatket}[3]{\langle #1 \hspace{\diracspacing} | #2 | \hspace{\diracspacing} #3 \rangle}

\newcommand{\abs}[2][]{#1| #2 #1|}
\newcommand{\cA}{\mathcal{A}}

\newcommand{\cF}{\mathcal{F}}

\newcommand{\cL}{\mathcal{L}}

\newcommand{\cN}{\mathcal{N}}

\newcommand{\cQ}{\mathcal{Q}}

\newcommand{\cS}{\mathcal{S}}

\newcommand{\sH}{\mathscr{H}}

\newcommand{\cAbnd}{\cA_{\textnormal{bnd}}}
\newcommand{\cAext}{\cA_{\textnormal{ext}}}
\newcommand{\cAexp}{\cA_{\textnormal{exp}}}
\newcommand{\cLdet}{\cL_{\textnormal{det}}}

\newcommand{\cQfin}{\cQ_{\textnormal{finite}}}
\usepackage{xspace}
\mathchardef\mhyphen="2D
\newcommand{\scen}[3][2]{\ensuremath{(#1\mathord{\scriptstyle^{_{\mhyphen}}}#2\mathord{\scriptstyle^{_{\mhyphen}}}#3)}}
\newcommand{\vecg}{\vec{g}}
\newcommand{\vecu}{\vec{u}}
\newcommand{\vecb}{\vec{B}}
\newcommand{\vecp}{\vec{P}}
\newcommand{\subfontsize}{\tiny}
\newcommand{\PCHSH}{\vecp_{\textnormal{\subfontsize CHSH}}}

\newcommand{\PPR}{\vecp_{\textnormal{\subfontsize PR}}}
\newcommand{\PPRk}[1]{\vecp_{\textnormal{\subfontsize PR, #1}}}
\newcommand{\Pdet}[1]{\vecp_{d, #1}}
\newcommand{\PL}[1]{\vecp_{\subfontsize L, #1}}
\newcommand{\PNE}{\vecp_{\textnormal{\subfontsize NE}}}
\newcommand{\Pnoise}{\vecp_{\textnormal{noise}}}
\newcommand{\PHardy}{\vecp_{\textnormal{\subfontsize Hardy}}}
\newcommand{\psiH}{ \psi_{\textnormal{H}} }
\newcommand{\realreal}{$\big( \sH_{A}, \sH_{B}, \rho_{AB}, \{ E_{a}^{x} \}, \{ F_{b}^{y} \} \big)$}
\newcommand{\idealreal}{$\big( \sH_{A'}, \sH_{B'}, \Psi_{A'B'}, \{ P_{a}^{x} \}, \{ Q_{b}^{y} \} \big)$}
\newcommand{\betaCHSH}{\beta_{\textnormal{\subfontsize CHSH}}}
\newcommand{\NS}{\cN\cS}

\usepackage{bm}
\usepackage{units}
\usepackage{soul,color}

\usepackage{microtype}
\let\savedegree\degree
\let\degree\relax
\usepackage[matha]{mathabx} 
\let\degree\savedegree
\begin{document}
\title{Geometry of the set of quantum correlations}
\author{Koon Tong Goh}
\email{ktgoh@u.nus.edu}
\affiliation{Centre for Quantum Technologies\(,\) National University of Singapore\(,\) Singapore\(,\) 117543}
\author{J{\k{e}}drzej Kaniewski}
\email{jkaniewski@math.ku.dk}
\affiliation{QMATH, Department of Mathematical Sciences, University of Copenhagen, Universitetsparken 5, 2100 Copenhagen, Denmark}
\author{Elie Wolfe}
\email{ewolfe@perimeterinstitute.ca}
\affiliation{Perimeter Institute for Theoretical Physics\(,\) Waterloo\(,\) Ontario\(,\) Canada\(,\) N2L 2Y5}
\author{Tam\'as V\'ertesi}
\affiliation{Institute for Nuclear Research\(,\) Hungarian Academy of Sciences\(,\) Debrecen\(,\) Hungary\(,\) 4001}
\author{Xingyao Wu}
\affiliation{Joint Center for Quantum Information and Computer Science\(,\) University of Maryland\(,\) College Park\(,\) MD 20742\(,\) USA}
\author{Yu Cai}
\affiliation{Centre for Quantum Technologies\(,\) National University of Singapore\(,\) Singapore\(,\) 117543}
\author{Yeong-Cherng Liang}
\email{ycliang@mail.ncku.edu.tw}
\affiliation{Department of Physics, National Cheng Kung University\(,\) Tainan\(,\) Taiwan\(,\) 701}
\author{Valerio Scarani}
\affiliation{Centre for Quantum Technologies\(,\) National University of Singapore\(,\) Singapore\(,\) 117543}
\affiliation{Department of Physics\(,\) National University of Singapore\(,\) Singapore\(,\) 117542}
\begin{abstract}
It is well known that correlations predicted by quantum mechanics cannot be explained by any classical (local-realistic) theory. The relative strength of quantum and classical correlations is usually studied in the context of Bell inequalities, but this tells us little about the geometry of the quantum set of correlations. In other words, we do not have good intuition about what the quantum set actually looks like. In this paper we study the geometry of the quantum set using standard tools from convex geometry. We find explicit examples of rather counter-intuitive features in the simplest non-trivial Bell scenario (two parties, two inputs and two outputs) and illustrate them using 2-dimensional slice plots. We also show that even more complex features appear in Bell scenarios with more inputs or more parties. Finally, we discuss the limitations that the geometry of the quantum set imposes on the task of self-testing.
\end{abstract}
\date{\today}
\maketitle
\section{Introduction}
Local measurements performed on entangled particles can give rise to correlations which are stronger than those present in any classical theory, a phenomenon known as Bell nonlocality. This seminal result, often referred to as Bell's theorem, was proven by Bell more than five decades ago~\cite{Bell:1964} and the existence of such correlations has recently been confirmed unequivocally in a couple of technologically-demanding experiments~\cite{Hensen:2015,Shalm:2015, Giustina:2015,Rosenfeld:2017}.

In addition to its fundamental significance Bell nonlocality has also found real-life applications, notably in secure communication, generation of certifiably secure randomness and more generally device-independent quantum information processing (see Ref.~\cite{Brunner:2014} for a review on Bell nonlocality and its applications). In the device-independent setting we do not have a complete description of our physical setup and draw conclusions based only on the observed correlations instead. Thorough understanding of the sets of correlations allowed by different physical theories is thus essential to comprehend the power of device-independent quantum information processing.

In this language, Bell's theorem simply states that the set of correlations allowed by quantum theory $\cQ$ is a \emph{strict superset} of the set of correlations allowed by classical theories $\cL$. The difference between these two objects is often investigated via \emph{Bell inequalities}, i.e.~linear constraints that must be satisfied by classical correlations, but may be violated by quantum mechanics. Usual examples include the inequalities derived by Clauser, Horne, Shimony and Holt (CHSH)~\cite{CHSH} and Mermin~\cite{Mermin:1990:PRL}.

Although quantum correlations may be stronger than their classical counterpart, they cannot be arbitrarily strong. In particular, they obey the no-signalling principle proposed by Popescu and Rohrlich~\cite{Popescu1994}. Imposing this principle alone gives the no-signalling set $\NS$, which turns out to be a {\em strict superset} of $\cQ$, i.e.~we arrive at the following well-known strict inclusions:
\begin{equation}
\label{eq:correlation-sets-inclusion}
\cL \subsetneq \cQ \subsetneq \NS.
\end{equation}
Bell scenarios are parametrised by the number of inputs and outputs at each site. In this work we assume that all these numbers are finite and then the local and no-signalling sets are polytopes, while the quantum set is a convex set, but not a polytope. It is known that all of them span the same affine space, i.e.~$\dim \cL = \dim \cQ = \dim \NS$~\cite{Tsirelson:1993, pironio05a}.

In the literature the relation between the three sets is sometimes represented by a simple diagram which consists of a circle sandwiched between two squares similar to Fig.~\ref{fig:circle-plot}, see e.g.~Ref.~\cite{Branciard:2011, Brunner:2014}. While this picture accurately represents a particular 2-dimensional slice (cross-section) of the quantum set, it does not capture all the intricacies related to its geometry, see e.g.~Refs.~\cite{Allcock:2009, Allcock:2009b, GYNI, Yang:2011, Fritz:2013, Putz2014, Chaves:2015aa, deVicente:2015, Christensen:2015, Lin:2017, Zhou:2017}. The quantum set is arguably the most important object in the field of quantum correlations and while some special subspaces are rather well-understood~\cite{Tsirelson:1987, Landau:1988, Werner:2001, Masanes:2006, Cabello:2005}, in general surprisingly little is known about its geometry~\cite{Tsirelson:1993} beyond the fact that it is convex~\cite{Pitowsky:Book}. In this work we explore the unusual features of the quantum set and use standard notions from convex geometry to formalise them. Understanding the geometry of the quantum set has immediate implications for (at least) two distinct lines of research.

The first one is related to the question of whether the quantum set admits a ``physical description'', i.e.~whether there exists a simple, physically-motivated principle which singles out precisely the quantum set without referring to operators acting on Hilbert spaces. Several such rules have been proposed \cite{Popescu1994, Navascues2009, Pawlowski2009, Fritz:2013, Navascus:2015, Zhou:2017} (the no-signalling principle being the first), but none of them has been shown to recover the quantum set. Checking whether these physical principles correctly reproduce the unusual features of the true quantum set, as was done in Refs.~\cite{Yang:2011,Fritz:2013,Chaves:2015aa, deVicente:2015, Christensen:2015, Zhou:2017, Lin:2017}, will give us a better understanding of their strengths and weaknesses and might help us in the search of the correct physical principle.

The geometry of the quantum set is also related to the task of self-testing \cite{Tsirelson:1987, Tsirelson:1993, Popescu:1992, Mayers:2004} in which we aim to deduce properties of the quantum system under consideration from the observed correlations alone. The fact that the geometry of the quantum set is more complex than that of the circle in Fig.~\ref{fig:circle-plot} imposes concrete limitations on our ability to make self-testing statements, which we discuss in the relevant sections.

In Section~\ref{sec:preliminaries} we define the three correlation sets and propose a geometrically-motivated classification of Bell functions. Sections~\ref{sec:faces-2222} and~\ref{sec:nonlocal-faces} contain examples of various unusual geometrical features of the quantum set. In Section~\ref{sec:conclusions} we summarise the most important findings and discuss some open questions. In the appendices, one can find (1) a simple proof that the quantum set in the simplest Bell scenario is closed, (2) connections between the CHSH Bell inequality violation and some distance measures in the same Bell scenario, (3) further examples of unusual slices of the quantum set, (4) tools that we have developed to identify unusual quantum faces,  (5) a proof that the optimal quantum distribution realising the Hardy paradox~\cite{Hardy1992} is not {\em exposed}, and (6) some other technical details of our results.
\section{Preliminaries}
\label{sec:preliminaries}
\begin{figure}[ht]
\includegraphics[width=8cm]{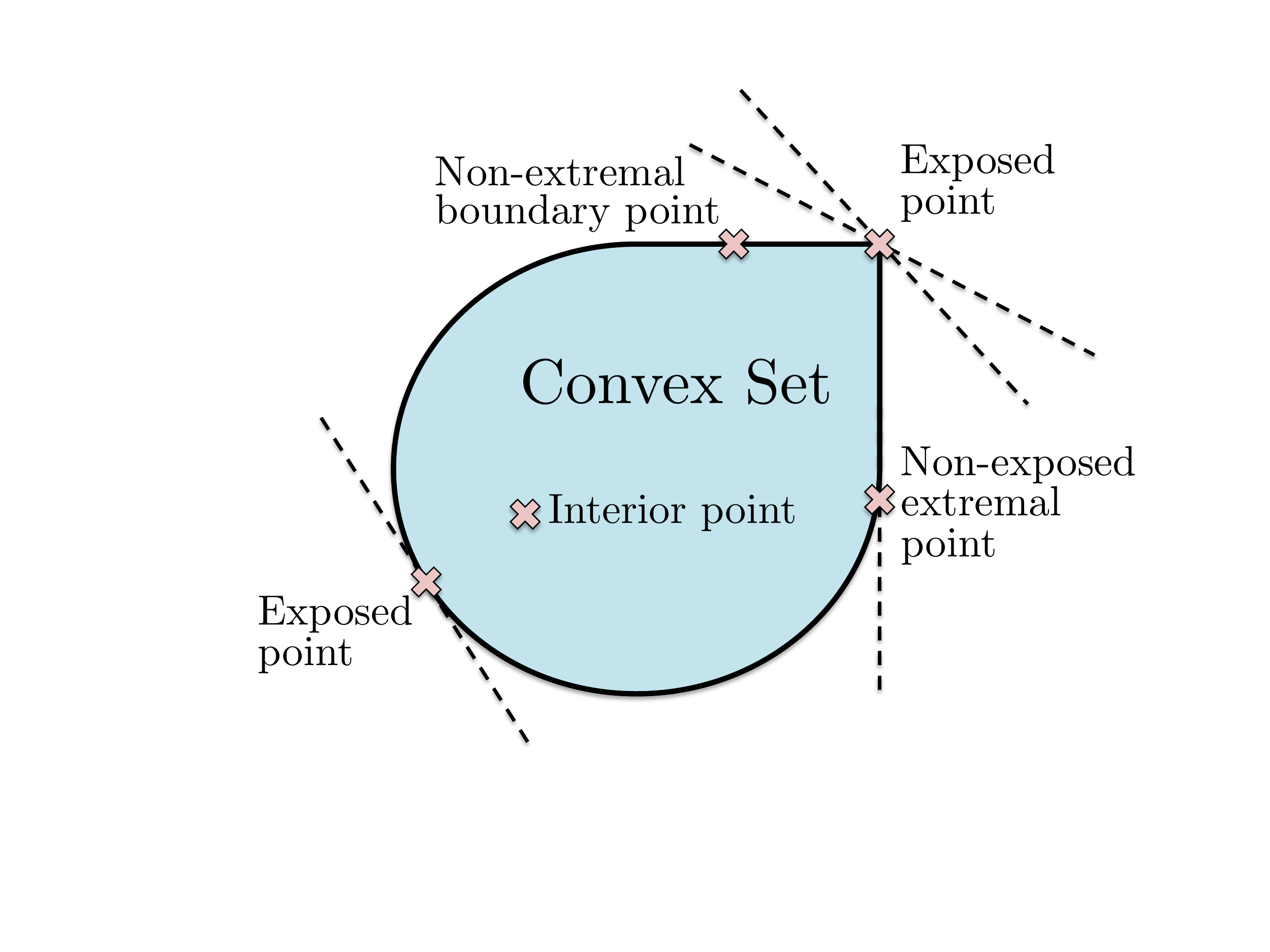}
\caption{Different types of points of a compact convex set.}
\label{fig:convex-set}
\end{figure}
While discussing the geometry of the quantum set it is natural to employ standard tools from convex geometry, e.g.~the notions of \textbf{exposed}, \textbf{extremal} and \textbf{boundary} points. Let us recall that for a compact convex set $\cA$ we have
\begin{equation}
\label{eq:hierarchy}
\cAexp \subseteq \cAext \subseteq \cAbnd \subseteq \cA
\end{equation}
and as shown in Fig.~\ref{fig:convex-set}
these inclusions are in general strict. We will also use the notion of an \textbf{exposed face} of a convex set. A short and self-contained introduction to these notions can be found in Appendix~\ref{app:convex-sets}.
\subsection{Three important correlation sets}
\label{sec:correlations-sets}
Following the convention of Ref.~\cite{Donohue:2015} we denote the Bell scenario of $n$ parties who (each) have $m$ measurement settings with $\Delta$ possible outcomes by $\scen[n]{m}{\Delta}$. In this work we focus predominantly on the bipartite case, i.e.~$n = 2$, and then the entire statistics can be assembled into a real vector $\vecp := \big( P(ab | xy) \big) \in \amsbb{R}^{ m^{2} \Delta^{2} }$, which we will refer to as the \textbf{behaviour}\footnote{Tsirelson first used the term ``behaviour''~\cite{Tsirelson:1993} to describe a family of probability distributions indexed by tuples of setting values. The term has become widely adopted, e.g.~in Refs.~\cite{Navascues:2008,Brunner:2014}. The terms ``box'' \cite{Barrett:2005:Feb,Barrett:2005:Sep}, ``probability model'' \cite{Acin:2015} and ``correlation'' \cite{Jones:2005,Barnum:2010} are also commonly used.}, \textbf{probability point} or simply a \textbf{point}. It is clear that all conditional probability distributions must be non-negative
\begin{equation}
\label{eq:positivity}
P(a b | x y) \geq 0 \quad \forall a, b, x, y
\end{equation}
and normalised
\begin{equation}
\label{eq:normalisation}
\sum_{ab} P(a b |x y) = 1 \quad \forall x, y.
\end{equation}
\subsubsection{The no-signalling set $\NS$}
A probability point belongs to the no-signalling set if it satisfies
\begin{equation}
\label{eq:no-signalling}
\begin{split}
    &\forall a, x, y, y' \; \sum_{b} P(a b |x y)=\sum_{b} P(a b| x y')\,\text{ and }\\
    &\forall b, x, x', y \; \sum_{a} P(a b |x y)=\sum_{a} P(a b| x' y)\,.
\end{split}
\end{equation}
The term no-signalling~\cite{Barrett:2005:Feb} refers to the fact that the choice of local settings of one party does not affect the outcome distribution of the other party. We denote the set of all no-signalling behaviours by $\NS$ and since it is characterised by a finite number of linear inequalities and equalities, namely \eqref{eq:positivity}, \eqref{eq:normalisation} and \eqref{eq:no-signalling}, the no-signalling set is a polytope.
\subsubsection{The quantum set $\cQ$}
\label{sec:quantum-set}
The quantum set $\cQ$ is the set of correlations which can be achieved by performing local measurements on quantum systems. Following the standard tensor-product paradigm each party is assigned a Hilbert space $\sH$ of finite dimension $d := \dim(\sH) < \infty$. A valid quantum state corresponds to a $d^{2} \times d^{2}$ matrix which is positive semidefinite and of unit trace. A local measurement with $\Delta$ outcomes is a decomposition of the $d$-dimensional identity into positive semidefinite operators, i.e.~$\{ E_{a} \}_{a = 1}^{\Delta}$ such that $E_{a} \geq 0$ for all $a$ and
\begin{equation}
\sum_{a = 1}^{\Delta} E_{a} = \mathbb{1}_{d},
\end{equation}
where $\mathbb{1}_{d}$ denotes the $d$-dimensional identity matrix.

We define $\cQfin$ to be the set of behaviours which can be generated when local Hilbert spaces are finite-dimensional, i.e.~$\vecp \in \cQfin$ if there exists a finite-dimensional quantum state $\rho$ and measurements $\{ E_{a}^{x} \}, \{ F_{b}^{y} \}$ such that
\begin{equation}
P(ab|xy) = \tr \big[ ( E_{a}^{x} \otimes F_{b}^{y} ) \rho \big]
\end{equation}
for all $a, b, x, y$.

To make the underlying mathematics neater, we define the quantum set $\cQ$ as the closure of $\cQfin$, i.e.~we explicitly include all the limit points, which makes the quantum set $\cQ$ compact.\footnote{A detailed treatment of this issue, which is of secondary importance in the context of this work, can be found in Appendix~\ref{app:quantum-set}.} The fundamental result that $\cQ \neq \cQfin$ for some finite Bell scenarios was only recently established by \citet{Slofstra:2017} (see also the recent work of Dykema et al.~\cite{dykema17a}).
\subsubsection{The local set $\cL$}

We call a probability point \textbf{deterministic} if the output of each party is a (deterministic) function of their input and we denote the set of deterministic points by $\cLdet$. The local set is defined as the convex hull of the deterministic points
\begin{equation*}
	\cL := \conv\left[ \cLdet \right].
\end{equation*}
Since $\cLdet$ is a finite set, the local set $\cL$ is a polytope.

\subsection{Bell functions and exposed faces of the correlation sets}
\label{sec:bell-functions}
A \textbf{Bell function} is a real vector $\vec{B} \in \amsbb{R}^{ m^{2} \Delta^{2} }$ and the value of the function on a specific behaviour $\vecp$ is simply the inner product $\vecb \cdot \vecp$. For a given correlation set $\cS = \cL, \cQ, \NS$ we denote the maximum value of the Bell function by
\begin{equation*}
\beta_{\cS}(\vecb) := \max_{\vecp \in \cS} \vecb \cdot \vecp.
\end{equation*}
Note that since all these sets are compact, the maximum is always achieved. To simplify the notation we will simply write $\beta_{\cS}$ whenever the Bell function is clear from the context.

Bell functions are useful for studying the three correlation sets, but they suffer from the problem of non-uniqueness, i.e.~the same function can be written in multiple ways which are not always easily recognised as equivalent (see, however, Ref.~\cite{Rosset2014a}). To overcome this obstacle instead of studying the inequalities we study the (exposed) faces they give rise to. For a correlation set $\cS = \cL, \cQ, \NS$ every Bell function $\vecb$ identifies a face
\begin{equation*}
\cF_{\cS}(\vecb) := \{ \vecp \in \cS : \vecb \cdot \vecp = \beta_{\cS} \}.
\end{equation*}
All the sets considered here are compact, so the face is always non-empty, i.e.~it contains at least one point. Since the local set $\cL$ and the no-signalling set $\NS$ are polytopes, all their faces are also guaranteed to be polytopes, while for the quantum set $\cQ$ this is not necessarily the case.

The dimension of the face $\cF_{\cS}(\vecb)$ is simply the dimension of the affine subspace spanned by the points in $\cF_{\cS}(\vecb)$. While the dimensions of local and no-signalling faces are easy to compute (we simply find which vertices saturate the maximal value and then check how many of them are affinely independent), there is no generic way of computing the dimension of a quantum face. However, if the quantum value coincides with either the local or the no-signalling value, an appropriate bound follows directly from the set inclusion relation~\eqref{eq:correlation-sets-inclusion}:
\begin{align}
\label{eq:dimQlb}
\beta_{\cQ}(\vecb) = \beta_{\cL}(\vecb) &\implies \dim \big( \cF_{\cQ}(\vecb) \big) \geq \dim \big( \cF_{\cL}(\vecb) \big)
\\
\shortintertext{and}
\label{eq:dimQub}
\beta_{\cQ}(\vecb) = \beta_{\NS}(\vecb) &\implies \dim \big( \cF_{\cQ}(\vecb) \big) \leq \dim \big( \cF_{\NS}(\vecb) \big).
\end{align}
A \textbf{flat boundary region} is an exposed face which contains more than a single point, but is strictly smaller than the entire set.\footnote{One should take care not to confuse the face $\cF_{\cS}(\vecb)$ with the hyperplane $\vecb \cdot \vecp = \beta_{\cS}$: the face  $\cF_{\cS}(\vecb)$ is the intersection of $\cS$ with the hyperplane.}

Focusing on faces, rather than Bell functions, reduces the undesired ambiguity; the following example shows that multiple Bell functions may give rise to the same face. Let $\vecp$ be a deterministic point and recall that for such a point all the conditional probabilities are either zero or one. Consider a Bell function $\vecb$ whose coefficients satisfy
\begin{equation}
\label{eq:exposing-family}
B(abxy)
\begin{cases}
    > 0 \nbox{if} P( ab | xy ) = 1,\\
    = 0 \nbox{otherwise.}
\end{cases}
\end{equation}
Every Bell function in this (continuous) family has a unique maximiser, which is precisely the point $\vecp$ (regardless of the choice of the correlation set). This simple example illustrates that (i) multiple Bell functions can have precisely coinciding sets of maximisers (i.e.~they give rise to the same face) and (ii) all deterministic probability points are exposed in all three sets $\cL, \cQ$ and $\NS$ (see Appendix~\ref{app:convex-sets} for the definition of an exposed point).

Bell functions can be split into $4$ (disjoint) classes by comparing the maximal values over the three correlation sets:
\begin{compactenum}
    \item[(1)] $\beta_{\cL} < \beta_{\cQ} < \beta_{\NS}$: all three values differ,
    \item[(2)] $\beta_{\cL} = \beta_{\cQ} < \beta_{\NS}$: only local and quantum values coincide,
    \item[(3)] $\beta_{\cL} < \beta_{\cQ} = \beta_{\NS}$: only quantum and no-signalling values coincide,
    \item[(4)] $\beta_{\cL} = \beta_{\cQ} = \beta_{\NS}$: all three values coincide.
\end{compactenum}
Whenever two of these values coincide, we can make the classification finer by asking whether the resulting faces coincide or not. In the list below we fine-grain the four Bell-value classes into nine face-comparison classes, using $=$ vs.~$\subsetneq$ to distinguish exact coincidence from strict containment. Enumerating all possible cases leads to:
\begin{compactenum}
    \item[(1)\hphantom{a}] $\beta_{\cL} < \beta_{\cQ} < \beta_{\NS}\,$,
    \item[(2a)] $\beta_{\cL} = \beta_{\cQ} < \beta_{\NS}$ and $\cF_{\cL} \subsetneq \cF_{\cQ}\,$,
    \item[(2b)] $\beta_{\cL} = \beta_{\cQ} < \beta_{\NS}$ and $\cF_{\cL} = \cF_{\cQ}\,$,
    \item[(3a)] $\beta_{\cL} < \beta_{\cQ} = \beta_{\NS}$ and $\cF_{\cQ} \subsetneq \cF_{\NS}\,$,
    \item[*(3b)] $\beta_{\cL} < \beta_{\cQ} = \beta_{\NS}$ and $\cF_{\cQ} = \cF_{\NS}\,$,
    \item[(4a)] $\beta_{\cL} = \beta_{\cQ} = \beta_{\NS}$ and  $\cF_{\cL} \subsetneq \cF_{\cQ} \subsetneq \cF_{\NS}\,$,
    \item[(4b)] $\beta_{\cL} = \beta_{\cQ} = \beta_{\NS}$ and  $\cF_{\cL} = \cF_{\cQ} \subsetneq \cF_{\NS}\,$, 
    \item[*(4c)]  $\beta_{\cL} = \beta_{\cQ} = \beta_{\NS}$ and  $\cF_{\cL} \subsetneq \cF_{\cQ} = \cF_{\NS}\,$,
    \item[(4d)] $\beta_{\cL} = \beta_{\cQ} = \beta_{\NS}$ and $\cF_{\cL} = \cF_{\cQ} = \cF_{\NS}\,$.
\end{compactenum}
It turns out that two (those marked with $*$) among these nine cases are actually \emph{not} realisable. Indeed, no nonlocal vertex of the no-signalling polytope can be obtained by measuring quantum systems~\cite{Ramanathan:2016}, which places some restrictions on the cases satisfying $\beta_{\cQ} = \beta_{\NS}$. The face $\cF_{\NS}$ is the convex hull of some vertices of the no-signalling polytope. If all these vertices are local, we must have $\cF_{\cL} = \cF_{\NS}$, which puts us in the class (4d). On the other hand, if there is at least one nonlocal vertex, this vertex does not belong to the quantum set. This immediately implies that $\cF_{\cQ} \subsetneq \cF_{\NS}$, which eliminates classes (3b) and (4c). All the remaining classes exist and some of them we discuss in detail, e.g.~class (1) in Section~\ref{sec:b1}, (2a) in Section~\ref{sec:b2}, (2b) in Section~\ref{sec:b3} (see the generic cases considered therein), (4a) in Section~\ref{app:positivity} and (4b) in Section~\ref{sec:b5}. Class (3a) does not appear in the \scen{2}{2} scenario (see Appendix~\ref{app:no-3a-in-2222} for a proof), but it is easy to check that the magic square game~\cite{cleve04a} or the (tripartite) Mermin inequality~\cite{Mermin:1990:PRL} belong precisely to that class. Class (4d) is represented by the family of inequalities defined in Eq.~\eqref{eq:exposing-family} (although the resulting face is just a single point; a 1-dimensional example of this kind is given in Appendix~\ref{app:type-4d}).
\subsection{Self-testing of quantum systems}
Some probability points in the quantum set have the surprising property that there is essentially only one way of realising them in quantum mechanics, a phenomenon known as \emph{self-testing}. In this paper we do not prove any new self-testing results, but we provide explicit examples where the usual self-testing statements cannot be made. Moreover, we prove a relation between self-testing and extremality (see Appendix~\ref{app:self-testing} for details), which we use to deduce that certain points are extremal in the quantum set.
\section{Faces of the quantum set in the \scen{2}{2} scenario}
\label{sec:faces-2222}
In this paper we focus predominantly on the simplest non-trivial Bell scenario, i.e.~the case of $m = \Delta = 2$. It is well known~\cite{Fine1982} that in this scenario the local set is fully described by the positivity inequalities~\eqref{eq:positivity}, no-signalling constraints~\eqref{eq:no-signalling} and 8 additional inequalities, which are all equivalent (up to permutations of inputs and outputs) to the CHSH inequality~\cite{CHSH}. The existence of a single type of (facet) Bell inequalities and the fact that any no-signalling probability point $\vecp \in \NS$ can violate at most one of these inequalities~\cite{Liang:2010} means that we can interpret the CHSH violation as a measure of distance from the local set. More specifically, Bierhorst showed that the total variation distance from the local set and the local content~\cite{Scarani2008} can be written as linear functions of the violation~\cite{bierhorst16a}. In Appendix~\ref{app:chsh-violation-distance-measure} we show that the same property holds for various notions of \emph{visibility}.

The structure of the quantum set turns out to be significantly more complex. Let us start by introducing convenient notation for the \scen{2}{2} scenario. Correlations in the \scen{2}{2} scenario are described by vectors $\vecp \in \amsbb{R}^{16}$, but due to the no-signalling constraints these vectors span only an 8-dimensional subspace and it is convenient to use a representation which takes advantage of this dimension reduction. Following the convention $a, b \in \{0, 1\}$ we define the local marginals as
\begin{align*}
\expec{ A_{x} } &:= P(a = 0 | x) - P(a = 1 | x),\\
\expec{ B_{y} } &:= P(b = 0 | y) - P(b = 1 | y)
\end{align*}
and the correlators as
\begin{equation*}
\expec{ A_{x} B_{y} } := P(a = b | x y) - P(a \neq b | x y).
\end{equation*}
The inverse relation is given by
\begin{align}
\label{Eq:CorToProb}
    P(&a b|x y)=\\\nonumber
    &\frac{1}{4} \big(1 + (-1)^a \expec{A_{x}} + (-1)^b \expec{B_{y}}+ (-1)^{a+b} \expec{A_{x} B_y} \big).
\end{align}
While this transformation is valid for any no-signalling point, the notation is inspired by quantum mechanics, since for a quantum behaviour the local marginals and correlators are simply expectation values ($\expec{X} = \tr (X \rho)$) of the local observables
\begin{align*}
A_{x} &= E_{0}^{x} - E_{1}^{x},\\
B_{y} &= F_{0}^{y} - F_{1}^{y}
\end{align*}
and their products. The expectation values are conveniently represented in a table
\begin{equation}
    \vecp =
    \begin{array}{c|c|c}
        & \expec{B_0} &  \expec{B_1} \\ \hline
        \expec{A_0} &  \expec{A_0B_0} &  \expec{A_0B_1} \\ \hline
        \expec{A_1} &  \expec{A_1B_0} &  \expec{A_1B_1}
    \end{array} \;
\end{equation}
and it is natural to use the same representation when writing down Bell functions. It is worth pointing out that the coordinate transformation that takes us from the conditional probabilities of events $P(ab|xy)$ to the local marginals ($\expec{A_{x}}, \expec{B_{y}}$) and correlators ($\expec{A_{x} B_{y}}$) is a linear transformation, but it is not isometric. In other words, the transformation does not change any qualitative features of the set, e.g.~whether a point is extremal or exposed, but it might affect measures of distance or volume. In order to make this transformation isometric we would need to define our coordinate system as $( c_{1} \expec{A_{x}}, c_{1} \expec{B_{y}}, c_{2} \expec{A_{x} B_{y}} )$ for suitably chosen constants $c_{1}, c_{2}$.

In the remainder of the section we look at various quantum faces in the \scen{2}{2} scenario ordered according to the classification introduced in Section~\ref{sec:bell-functions}.
\subsection{A quantum face with $\beta_{\cL} < \beta_{\cQ} < \beta_{\NS}$}
\label{sec:b1}
Our first example is the CHSH Bell function~\cite{CHSH}, which reads
\begin{equation}
\label{eq:b1}
    \vecb_{1} :=
    \begin{array}{r|r|r}
        &  0 &  0\\ \hline
        0 &  1 &  1\\ \hline
        0 &  1 &  -1
    \end{array} \; , \;
    \vecb_{1} \cdot \vecp \leq
    \begin{cases}
        2 & \cL\\
        \bm{2\sqrt{2}} & \cQ\\
        4 & \NS
    \end{cases}.
\end{equation}
This inequality is known to have a unique quantum maximiser~\cite{McKague:2012, Brunner:2014}
\begin{equation}
\label{eq:pCHSH}
	\PCHSH :=
	\begin{array}{r|r|r}
	    &  0 &  0\\ \hline
	    0 & \tfrac{1}{\sqrt{2}} & \tfrac{1}{\sqrt{2}}\\ \hline
	    0 & \tfrac{1}{\sqrt{2}} & -\tfrac{1}{\sqrt{2}}
	\end{array} \, ,
\end{equation}
which implies that $\PCHSH$ is an exposed point of the quantum set. In Fig.~\ref{fig:circle-plot} we show $\PCHSH$ in the 2-dimensional slice spanned by 4 variants of the Popescu-Rohrlich (PR) box \cite{Barrett:2005:Feb, Barrett:2005:Sep, Bancal:2014, Brunner:2014, Ramanathan:2016}
\begin{equation}
\label{eq:PR1234}
\begin{split}
	\PPR &:=
	\begin{array}{r|r|r}
	    & 0 &  0\\ \hline
	    0 &  \phantom{-} 1 & 1\\ \hline
	    0 &  1 & -1
	\end{array} \; , \quad
	\PPRk{2} :=
	\begin{array}{r|r|r}
	    &   0 &  0\\ \hline
	    0 &  - 1 & - 1\\ \hline
	    0 &  - 1 & 1
	\end{array} \; ,\\
	\PPRk{3} &:=
	\begin{array}{r|r|r}
	    &   0 &  0\\ \hline
	    0 &  -1 & \phantom{-} 1\\ \hline
	    0 &  1 & 1
	\end{array} \; , \quad
	\PPRk{4} :=
	\begin{array}{r|r|r}
	    &   0 &  0\\ \hline
	    0 &  1 & -1\\ \hline
	    0 &  -1 & -1
	\end{array} \; .
\end{split}	
\end{equation}
\begin{figure}[ht]
    \includegraphics[width=9cm]{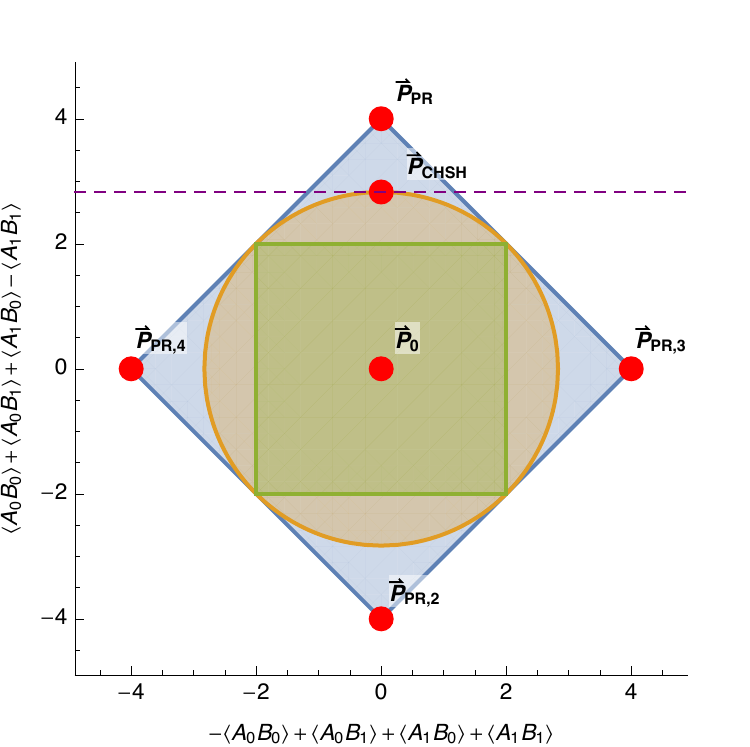}
    \caption{A 2-dimensional slice in which the quantum set exhibits no flat boundaries (first presented as Fig.~1 in Ref.~\cite{Branciard:2011}). Points on this slice can also be conveniently parametrised by two different versions of the CHSH Bell functions as in Fig.~3 of Ref.~\cite{Christensen:2015}.}
    \label{fig:circle-plot}
\end{figure}
The central point of this plot corresponds to the uniformly random distribution, which can be written as a uniform mixture of the 4 PR boxes, i.e.
\begin{equation}
\vecp_{0} :=
    \begin{array}{r|r|r}
        & 0 & 0\\ \hline
        0 & 0 & 0\\ \hline
        0 & 0 & 0
    \end{array} \, .
\end{equation}
\subsection{Quantum faces with $\beta_{\cL} = \beta_{\cQ} < \beta_{\NS}$} 
In this section we consider Bell functions satisfying $\beta_{\cL} = \beta_{\cQ}$, which implies that the corresponding quantum faces will contain some local points.
\subsubsection{Quantum faces with $\cF_{\cL} \subsetneq \cF_{\cQ}$ containing the CHSH point}
\label{sec:b2}
Consider the Bell function
\begin{equation}
\label{eq:b2}
\vecb_{2} :=
    \begin{array}{c|c|c}
        & 1 \mathopen{-} \sqrt{2} &  1 \\ \hline
        1 \mathopen{-} \sqrt{2} &  \sqrt{2} &  \sqrt{2} \\ \hline
        1 &  \sqrt{2} & -\sqrt{2}
    \end{array} \; , \;
	\vecb_{2} \cdot \vecp \leq
    \begin{cases}
        4 & \cL\\
        \bm{4} & \cQ\\
        4\sqrt{2} & \NS
    \end{cases},
\end{equation}
where the local and no-signalling bounds have been computed by enumerating the vertices of the polytopes, while the quantum bound has been computed using the analytic technique of Wolfe and Yelin~\cite{Wolfe:2012}. The quantum bound is saturated by $\PCHSH$ but also by the deterministic point
\begin{equation}
\label{eq:Pdet1}
    \Pdet{1} :=
    \begin{array}{r|r|r}
        & 1 &  1\\ \hline
        1 &  1 &  1\\ \hline
        1 &  1 &  1
    \end{array} \; .
\end{equation}
This implies that the resulting quantum face is \emph{at least} 1-dimensional and we conjecture that this lower bound is actually tight, i.e.~that the quantum face is a line. Fig.~\ref{fig:pchsh-pd-slice} shows this quantum face in the slice containing $\PCHSH$, $\vecp_{0}$ and $\Pdet{1}$ (the same feature was presented in Fig.~3(c) of Ref.~\cite{Donohue:2015}).
\begin{figure}[h!t]
    \includegraphics[width=9cm]{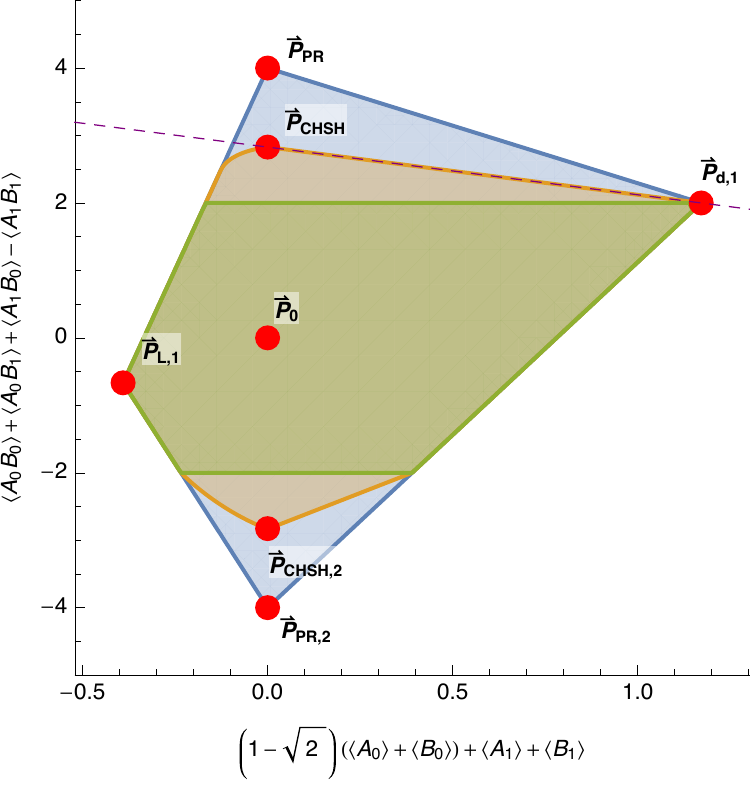}
    \caption{A slice containing $\PCHSH$, $\vecp_{0}$ and $\Pdet{1}$ is singled out by the following six equations: ${\expec{A_0}=\expec{A_1}=\expec{B_0}=\expec{B_1}}$, ${\expec{A_0 B_0}=\expec{A_0 B_1}=\expec{A_1 B_0}}$ and ${\expec{A_0}+\expec{A_1}+\expec{B_0}+\expec{B_1}=2 ( \expec{A_0 B_0}+\expec{A_1 B_1}} )$. The point $\PL{1}$ is given by $\expec{A_{x}} = \expec{B_{y}}=\expec{A_{x} B_{y}} = -\nicefrac{1}{3}$. Apart from the flat quantum face  $\cF_{\cQ}(\vecb_2)$ that connects $\PCHSH$ and $\Pdet{1}$, our numerical results suggest a few other flat regions on the boundary of (this slice of) the quantum set.}
    \label{fig:pchsh-pd-slice}
\end{figure}

\begin{figure}[ht]
    \includegraphics[width=9cm]{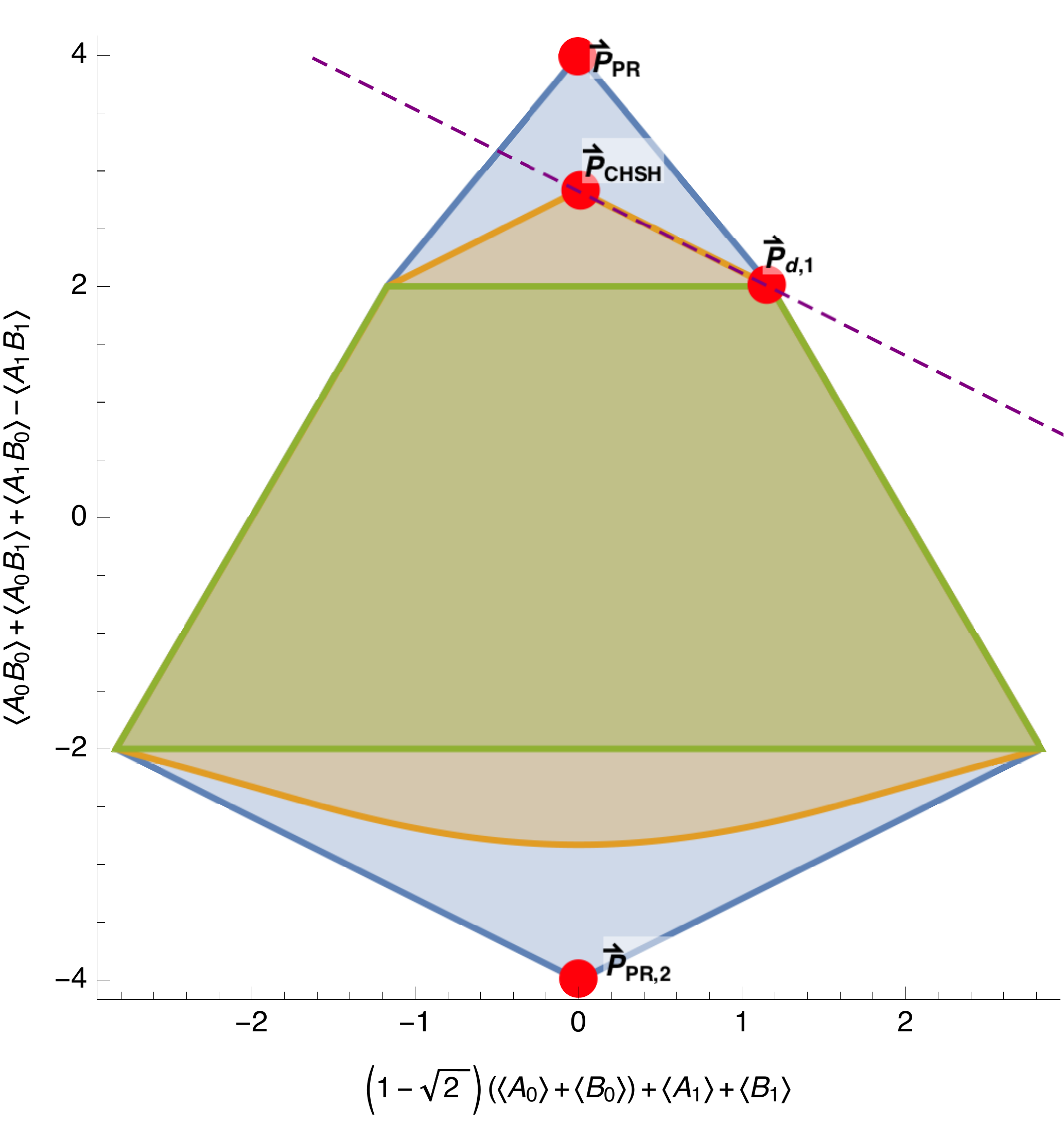}
    \caption{Projection illustrating the flat boundary identified by $\vecb_1$. While every point in a slice plot corresponds to precisely one behaviour, a point in a projection plot may simultaneously represent multiple behaviors. In this projection the behaviours $\PCHSH$ and $\Pdet{1}$ are the only behaviours that lie on the points $( 0, 2\sqrt{2} )$ and $( 4 - 2\sqrt{2}, 2 )$, respectively. We see that all three correlation sets are symmetric with respect to the reflection about the $x = 0$ line. This symmetry arises from the relabelling of the outcomes of all measurements, which flips the marginals, but leaves the correlators unchanged.}
    \label{fig:pchsh-pd-proj}
\end{figure}

The above quantum face is not the only flat face containing $\PCHSH$. To see this we swap the outcomes of all the measurements: this results in flipping the horizontal axis of Fig.~\ref{fig:pchsh-pd-slice} while leaving the vertical axis unchanged. This relabelling transforms the Bell function $\vecb_{2}$ to
\begin{equation*}
    \vecb_{2^{*}} :=
    \begin{array}{c|c|c}
        & \sqrt{2} \mathopen{-} 1 &  -1 \\ \hline
        \sqrt{2} \mathopen{-} 1 &  \sqrt{2} &  \sqrt{2} \\ \hline
        -1 &  \sqrt{2} & -\sqrt{2}
    \end{array} \, , \;
    \vecb_{2^{*}} \cdot \vecp \leq
    \begin{cases}
        4 & \cL\\
        \bm{4} & \cQ\\
        4\sqrt{2}  & \NS
    \end{cases},
\end{equation*}
whose quantum value is achieved by $\PCHSH$ and
\begin{equation}
\label{eq:Pdet2}
    \Pdet{2} :=
    \begin{array}{r|r|r}
        & -1 &  -1\\ \hline
        -1 &  1 &  1\\ \hline
        -1 &  1 &  1
    \end{array} \; .
\end{equation}
This immediately implies that a \emph{projection} plot of $\cQ$ using the same set of axes must look different from the \emph{slice} plot, a difference which is clearly seen in Figs.~\ref{fig:pchsh-pd-slice} and~\ref{fig:pchsh-pd-proj}. More generally, we can apply a suitably chosen relabelling of the inputs and/or outputs to obtain an equivalent face that connects $\PCHSH$ with any of the 8 deterministic points which saturate the local value of the CHSH inequality given in Eq.~\eqref{eq:b1}.

It is worth pointing out that the \emph{non-extremal} probability points lying on these quantum faces \emph{cannot} be obtained by performing measurements on two-qubit states. In other words, only the $\PCHSH$ and the corresponding local deterministic point are quantumly achievable using local Hilbert spaces of dimension $2$ --- as can be verified using the technique of~\citet[Appendix D]{Donohue:2015}. Aside from this example, there are many more additional faces of the quantum set which contain a deterministic point and an extremal nonlocal point. For a systematic method of finding them please refer to Appendix~\ref{app:identifying-certifying}.
\subsubsection{Higher-dimensional quantum faces with $\cF_{\cL} \subsetneq \cF_{\cQ}$}
\label{sec:b3}
Interestingly, higher-dimensional quantum faces containing nonlocal points can also be found in this simplest Bell scenario. Consider, for example, the following two-parameter family of Bell functions: 
\begin{align}
\nonumber
    &\vecb_{3} :=
    \begin{array}{c|c|c}
        & -a &  1 \\ \hline
        -a &  c &  c \\ \hline
        1 & c & -(c + 1 - 2a)
    \end{array} \, ,\\
	\label{eq:b3}
	&\vecb_{3} \cdot \vecp \leq
    \begin{cases}
        2c \mathopen{+} 1 & \cL\\
        \bm{2c \mathopen{+} 1} & \cQ\\
        4c \mathopen{+} 1 \mathopen{-} 2 a & \NS
    \end{cases},
\end{align}
where $a \in [0, 1]$ and $c \in [a, c_{\textnormal{max}}]$ with $c_{\textnormal{max}}$ being---for any given value of $a$---the largest value of $c$ for which the following inequality holds:
\begin{align}
\nonumber
(c-2 a+1) &\left(2 a^3-3 a^2+(3 a-1) c^2-5 (a-1) a c-c^3\right)\\
\geq &\frac{a^2}{4 c^2 } \left(-2 a^2+3 (a-1) c+a+c^2\right)^2.\label{eq:ACregion}
\end{align}
The quantum bound of Eq.~\eqref{eq:b3} is saturated by 3 deterministic points: $\Pdet{1}$ given in Eq.~\eqref{eq:Pdet1} and
\begin{equation}
\label{eq:Pdet34}
    \Pdet{3} :=
    \begin{array}{r|r|r}
        &  -1 &  1\\ \hline
        -1 &  1 & -1\\ \hline
        -1 & 1 & -1
    \end{array} \; ,\quad
    \Pdet{4} :=
    \begin{array}{r|r|r}
        &  -1 & -1\\ \hline
        -1 &  1 &  1\\ \hline
        1 & -1 & -1
    \end{array} \; .
\end{equation}
Note that $\Pdet{3}$ and $\Pdet{4}$ are related by simply swapping Alice and Bob, i.e.~transposing the matrix given in Eq.~\eqref{eq:Pdet34}.

Interestingly, for generic pairs $(a, c)$, the quantum inequality is saturated \emph{exclusively} by local points. On the other hand when we consider special pairs of $(a, c)$ at the limit of the region constrained by Eq.~\eqref{eq:ACregion}, we find that the quantum bound is saturated by an additional extremal nonlocal point. Table~\ref{table:b3} lists several functions from this \emph{one-parameter} family together with some properties of the extremal nonlocal maximisers.
\begin{center}
\begin{table}[h!]
\begin{tabular}{lccccc}
$a$ & $c$ & $\beta_{\cQ}$ & $\betaCHSH$ & $\lambda$ & $\phi$\\
\hline
 0.1 & 0.325 & 1.649 & 2.719 & 0.558 & $106.945\degree$ \\
 0.2 & 0.592 & 2.185 & 2.769 & 0.525 & $103.131\degree$ \\
 0.280776 & 0.781 & 2.562 & 2.792 & $\bf{0.5}$ & $100.358\degree$ \\
 0.5 & 1.193 & 3.386 & 2.808 & 0.564 & $93.091\degree$ \\
 0.586417 & 1.318 & 3.635 & 2.798 & 0.591 & $\bf{90\degree}$ \\
 0.6 & 1.335 & 3.670 & 2.795 & 0.596 & $89.486\degree$ \\
 0.7 & 1.444 & 3.889 & 2.766 & 0.631 & $85.336\degree$ \\
 0.8 & 1.510 & 4.020 & 2.711 & 0.674 & $80.075\degree$ \\
 0.846074 & 1.519 & 4.038 & 2.671 & 0.699 & $76.924\degree$ \\
 0.9 & 1.502 & 4.004 & 2.599 & 0.737 & $72.036\degree$ \\
 0.99 & 1.272 & 3.543 & 2.249 & 0.878 & $50.291\degree$ \\
 1 & 1 & 3 & n/a & n/a & n/a \\
\end{tabular}
\caption{Each row corresponds to a Bell function from the family defined in Eq.~\eqref{eq:b3}. The parameters $a$ and $c$, chosen to saturate Eq.~\eqref{eq:ACregion}, determine the quantum bound $\beta_{\cQ}$ of the function. The quantum face identified by each function consists of (at least) 4 extremal points: three deterministic points and one nonlocal point. The remaining 3 columns contain information about the nonlocal maximiser: the CHSH violation $\betaCHSH$, the entanglement of the optimal two-qubit state $\lambda$ (quantified by the square of the larger Schmidt coefficient, e.g.~$\lambda = 1/2$ corresponds to the maximally entangled state) and the optimal angle $\phi$ between the two local observables ($\phi = 90\degree$ corresponds to maximally incompatible observables). The optimal angle is always the same for both parties, which reflects the symmetry of the Bell function. The final row corresponds to a Bell function saturated exclusively by local points, i.e.~there is no nonlocal maximiser. It is interesting to examine the trends in the properties of the nonlocal maximiser. The maximally entangled state appears only for $a=\frac{\sqrt{17}-3}{4}\approx 0.280776$, with the entanglement of the state dropping monotonically as $a$ varies away from that special value in either direction. The angle between the observables decreases monotonically as $a$ increases while maximal incompatibility is observed for $a = \frac{1}{3} ( 14 \sqrt{13} \, \sin z - 23 )$ where $z := \frac{1}{6} \big( \pi - 2 \tan ^{-1} \big( 87 \sqrt{3}/4591 \big) \big)$, i.e.~$a \approx 0.586417$. Finally, the CHSH violation initially increases and then goes down, peaking at $\betaCHSH \approx 2.810$ for $a \approx 0.45$.}
\label{table:b3}
\end{table}
\end{center}

For non-maximal values of $c$ we have $\cF_{\cL} = \cF_{\cQ}$, whereas for maximal $c$ the quantum face extends into an additional dimension beyond the local subspace, i.e. $\cF_{\cL} \subsetneq \cF_{\cQ}$. The increase in the dimension can be easily seen by noting that all three local points saturating Eq.~\eqref{eq:b3} give the CHSH value of 2, which is exceeded by the additional quantum point. As we further increase $a$, we reach the values $a = c = 1$, which corresponds to a linear combination of positivity facets.
\subsection{Bell functions with $\beta_{\cL} = \beta_{\cQ} = \beta_{\NS}$} 
In Section~\ref{sec:bell-functions} we have already seen a family of Bell functions with $\beta_{\cL} = \beta_{\cQ} = \beta_{\NS}$ for which the resulting faces are identical $\cF_{\cL} = \cF_{\cQ} = \cF_{\NS}$. In this section we present examples of the remaining two classes.
\subsubsection{A Bell function satisfying $\cF_{\cL} \subsetneq \cF_{\cQ} \subsetneq \cF_{\NS}$ containing an extremal but non-exposed point}
Consider the Bell function
\begin{equation*}
    \vecb_{4} :=
    \begin{array}{c|c|c}
        & 0 & 0 \\ \hline
        0 & 0 & 0 \\ \hline
        0 & 0 & -1
    \end{array} \; , \;
    \vecb_{4} \cdot \vecp \leq
    \begin{cases}
        1 & \cL\\
        \bm{1} & \cQ\\
        1 & \NS
    \end{cases}.
\end{equation*}
We do not have an analytic characterisation of the corresponding quantum face, but we can show that it is not a polytope. More specifically, we show that already in the slice of unbiased marginals $\expec{A_0} = \expec{A_1} = \expec{B_0} = \expec{B_1} = 0$ the quantum face has an infinite number of extremal points. The analytic characterisation of the quantum set in the correlator space due to Tsirelson, Landau and Masanes~\cite{Tsirelson:1993, Landau:1988, Masanes:2005} states that correlators $\expec{ A_{x} B_{y} } $ belong to the quantum set if and only if
\begin{equation}
\label{eq:tlm-constraint}
1 + \prod_{xy} \expec{ A_{x} B_{y} } + \prod_{xy} \sqrt{ 1 - \expec{ A_{x} B_{y} }^{2} } - \frac{1}{2} \sum_{xy} \expec{ A_{x} B_{y} }^{2} \geq 0,
\end{equation}
where the sums and products go over $x, y \in \{0, 1\}$.\footnote{This elegant and symmetric form is obtained by simply squaring the inequality derived by Landau~\cite{Landau:1988}.} The quantum set in the correlator space is a \emph{projection} of the (full) quantum set onto the coordinates $( \expec{ A_{0} B_{0} }, \expec{ A_{0} B_{1} }, \expec{ A_{1} B_{0} }, \expec{ A_{1} B_{1} } )$. However, since all such correlations can be achieved with unbiased marginals~\cite{Tsirelson:1980}, the possible values of correlators in the slice of unbiased marginals are described precisely by constraint~\eqref{eq:tlm-constraint}. The quantum face is characterised by $\expec{ A_{1} B_{1} } = -1$, which leads to a cubic inequality
\begin{equation*}
2 \expec{ A_{0} B_{0} } \cdot \expec{ A_{0} B_{1} } \cdot \expec{ A_{1} B_{0} } + \expec{ A_{0} B_{0} }^{2} + \expec{ A_{0} B_{1} }^{2} + \expec{ A_{1} B_{0} }^{2} \leq 1.
\end{equation*}
Any point which saturates this inequality and additionally satisfies
\begin{equation*}
\max \big\{ \abs{ \expec{ A_{0} B_{0} } }, \abs{ \expec{ A_{0} B_{1} } }, \abs{ \expec{ A_{1} B_{0} } } \big\} < 1
\end{equation*}
is a self-test~\cite{wang16a} and, hence, must be an extremal point of the quantum set (see Appendix~\ref{app:self-testing} for a proof). It is easy to verify that there is an infinite number of such points and, therefore, the quantum face corresponding to the Bell function $\vecb_{4}$ must have an infinite number of extremal points.

The strict inclusions $\cF_{\cL} \subsetneq \cF_{\cQ} \subsetneq \cF_{\NS}$, although intuitively clear, are neatly presented in a particular slice. In Fig.~\ref{fig:extremal-not-exposed} we present a 2-dimensional slice containing the PR box $\PPR$ and two local behaviours
\begin{align*}
    \PL{2} =
    \begin{array}{c|c|c}
        & 0 &  0 \\ \hline
        0  &  \nicefrac{1}{3} & \nicefrac{1}{3}  \\ \hline
        0 &  \nicefrac{1}{3} & -1
    \end{array}
    \quad \textnormal{and} \quad
    \PL{3} =
    \begin{array}{c|c|c}
        & 0 &  0 \\ \hline
        0  &  1 & 1  \\ \hline
        0 &  1 & 1
    \end{array} \; .
\end{align*}
This slice is singled out by unbiased marginals and three of the correlators being equal $\expec{ A_{0} B_{0} } = \expec{ A_{0} B_{1} } = \expec{ A_{1} B_{0} }$. Points in this slice are conveniently parametrised by $\expec{A_{1} B_{1}}$ and
\begin{equation*}
    \alpha = \expec{ A_{0} B_{0} } = \expec{ A_{0} B_{1} } = \expec{ A_{1} B_{0} }.
\end{equation*}
\begin{figure}[h]
    \includegraphics[width=9cm]{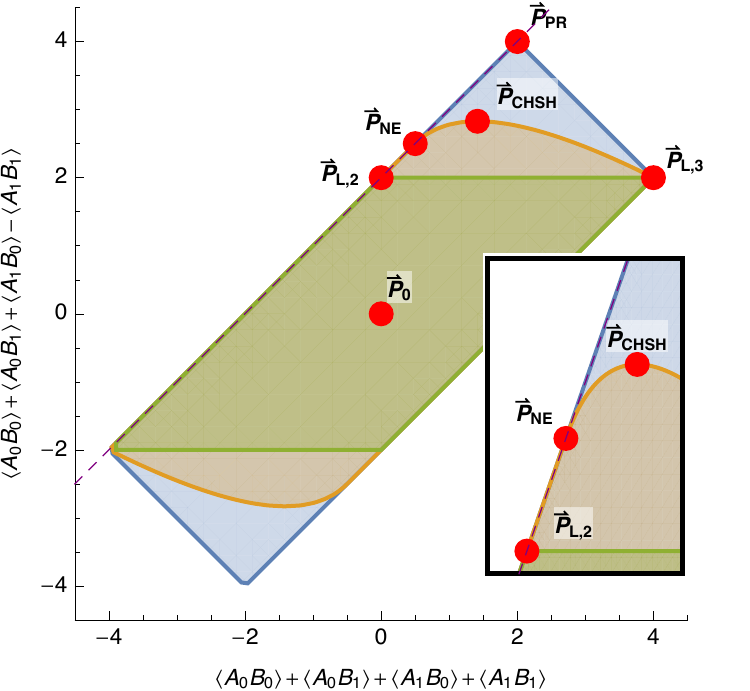}
    \caption{A highly symmetric 2-dimensional slice of the quantum set. Thanks to the analytic characterisation of the quantum set given in Eq.~\eqref{eq:curved-boundary}, we can rigorously show that $\PNE$ is a non-exposed point of the quantum set.}
    \label{fig:extremal-not-exposed}
\end{figure}
The constraint given in Eq.~\eqref{eq:tlm-constraint} implies the following
tight upper bound on $(- \expec{A_{1} B_{1}})$ as a function of $\alpha$:
\begin{align}
\label{eq:curved-boundary}
    - \expec{A_1 B_1} \leq
    \begin{cases}
        1 & -1\leq \alpha< \nicefrac{1}{2} \\
        3 \alpha - 4 \alpha^{3} & \nicefrac{1}{2} \leq \alpha \leq 1
    \end{cases} \, .
\end{align}
Let us consider the point at the boundary which corresponds to $\alpha = \nicefrac{1}{2}$, i.e.
\begin{align}
    \PNE =
    \begin{array}{c|c|c}
        & 0 &  0 \\ \hline
        0  &  \nicefrac{1}{2} & \nicefrac{1}{2}  \\ \hline
        0 &  \nicefrac{1}{2} & -1
    \end{array} \; .
\end{align}
This is precisely where the flat and curved parts of the boundary meet and it is easy to check that the gradients on both sides are equal, which implies that the point $\PNE$ is not exposed in this slice. Since an exposed point must remain exposed in every slice, we conclude that $\PNE$ is not exposed in the entire set. However, we know that $\PNE$ is an extremal point, because it is a self-test~\cite{wang16a}. We conjecture that any probability point which (i) saturates the constraint~\eqref{eq:tlm-constraint} and (ii) has precisely one correlator of unit modulus is of this type, i.e.~extremal but not exposed.

It turns out that similar geometric features are exhibited by the so-called Hardy point, i.e.~the (unique) point that maximally violates the Hardy paradox~\cite{Hardy1992}. The Hardy point is a self-test~\cite{Rabelo:2012} and therefore extremal, but it is not exposed (see Appendix~\ref{app:hardy-not-exposed} for a proof). This explains why previous attempts to find a Bell function that fully captures the nonlocal nature of the Hardy paradox failed. The authors of Ref.~\cite{Mancinska2014} proposed a sequence of Bell functions, whose maximiser approaches the Hardy point, but when one tries to take the limit, the coefficients of the functions diverge. This is precisely the behaviour one would expect when dealing with an extremal but not exposed point. A family of quantum faces that the Hardy point lies on is discussed in Appendix~\ref{app:positivity}.
\subsubsection{A quantum face with $\cF_{\cL} = \cF_{\cQ} \subsetneq \cF_{\NS}$}
\label{sec:b5}
Let us finish the discussion of the \scen{2}{2} scenario with an example of a quantum face which completely coincides with its local counterpart, but is strictly contained within the no-signalling face. Consider the Bell function
\begin{equation}
\label{eq:b5}
    \vecb_{5} :=
    \begin{array}{c|c|c}
        & 0 &  0 \\ \hline
        0  &  1 &  1 \\ \hline
        0 &  0 & 0
    \end{array} \; , \;
    \vecb_{5} \cdot \vecp \leq
    \begin{cases}
        2 & \cL\\
        \bm{2} & \cQ\\
        2 & \NS
    \end{cases}.
\end{equation}
We start by determining the quantum face $\cF_{\cQ}$. The maximal quantum value is achieved iff $\expec{ A_{0} B_{0} } = \expec{ A_{0} B_{1} } = 1$, which by constraint~\eqref{eq:tlm-constraint} implies that $\expec{A_{1} B_{0}} = \expec{A_{1} B_{1}}$. It is, however, straightforward to verify that such correlations are local (they cannot violate any variant of the CHSH inequality) and thus $\cF_{\cL} = \cF_{\cQ}$. On the other hand, the PR box also saturates the bound given in Eq.~\eqref{eq:b5}, thus showing that $\cF_{\cL} = \cF_{\cQ} \subsetneq \cF_{\NS}$.
\section{Nonlocal faces of positive dimension}
\label{sec:nonlocal-faces}
Our numerical studies of the quantum set in the \scen{2}{2} scenario suggest that every Bell function for which $\beta_{\cQ} > \beta_{\cL}$ has a unique maximiser in the quantum set. While we conjecture that this is indeed true in the \scen{2}{2} scenario, it is easy to see that it does not hold in general, e.g.~if we take the CHSH inequality and ``embed'' it in a Bell scenario with more inputs, then the maximal violation does not carry any information about the statistics corresponding to the additional inputs. A more natural family of such functions was proposed by Slofstra~\cite{Slofstra:2011}, but these require a large number of measurement settings on each side. On the other hand, a simple example was recently found in the tripartite \scen[3]{2}{2} scenario by Ramanathan and Mironowicz~\cite{ramanathan17a}. In this section we give an example in the bipartite scenario \scen{3}{2} and two additional examples in the \scen[3]{2}{2} scenario. What is particularly appealing about the tripartite examples is that we were able to fully determine the corresponding quantum faces.
\subsection{The \scen{3}{2} scenario}
Consider the correlation part of the $I_{3322}$ Bell function~\cite{Froissart1981, Collins2004}
\begin{equation}
\label{eq:i3322star}
\begin{split}
\vecb_{6} \cdot \vecp := &\expec{A_{0} B_{0}} + \expec{A_{0} B_{1}} + \expec{A_{0} B_{2}} + \expec{A_{1} B_{0}}\\
&+ \expec{A_{1} B_{1}} - \expec{A_{1} B_{2}} + \expec{A_{2} B_{0}} - \expec{A_{2} B_{1}}.
\end{split}
\end{equation}
The local and no-signalling values of this inequality have been found by enumerating the vertices of the respective polytopes, whereas the quantum value has been found using a semidefinite program~\cite{wehner06a} (see Appendix~\ref{app:b7-quantum-value} for details)
\begin{align}
    \vecb_{6} \cdot \vecp \leq
    \begin{cases}
        4 & \cL\\
        \bm{5} & \cQ\\
        8 & \NS
    \end{cases}.
\end{align}
Below we present a one-parameter family of quantum realisations which saturate the quantum bound of this Bell function. The shared state is $\ket{\Psi_{-}} = \frac{1}{\sqrt{2}} ( \ket{01} - \ket{10} )$ and the observables are
\begin{align*}
A_{0} &= \frac{1}{2} \Big( 2 \cos \frac{\pi}{6} \, \sigma_{x} + \cos \alpha \, \sigma_{y} + \sin \alpha \, \sigma_{z} \Big),\\
A_{1} &= \frac{1}{2} \Big( 2 \cos \frac{\pi}{6} \, \sigma_{x} - \cos \alpha \, \sigma_{y} - \sin \alpha \, \sigma_{z} \Big),\\
A_{2} &= \sigma_{y},\\
B_{0} &= - \cos \frac{\pi}{6} \, \sigma_{x} - \sin \frac{\pi}{6} \, \sigma_{y},\\
B_{1} &= - \cos \frac{\pi}{6} \, \sigma_{x} + \sin \frac{\pi}{6} \, \sigma_{y},\\
B_{2} &= - \cos \alpha \, \sigma_{y} - \sin \alpha \, \sigma_{z},
\end{align*}
where $\alpha \in [0, 2 \pi]$ is a free parameter. It is clear that all the marginals vanish: $\expec{A_{x}} = \expec{B_{y}} = 0$, while the correlators are given by
\begin{align*}
\expec{A_{0} B_{0}} &= \expec{A_{1} B_{1}} = \frac{3 + \cos \alpha}{4},\\
\expec{A_{0} B_{1}} &= \expec{A_{1} B_{0}} = \frac{3 - \cos \alpha}{4},\\
\expec{A_{0} B_{2}} &= \expec{A_{2} B_{0}} = \frac{1}{2},\\
\expec{A_{1} B_{2}} &= \expec{A_{2} B_{1}} = - \frac{1}{2},\\
\expec{A_{2} B_{2}} &= \cos \alpha.
\end{align*}
This family of probability points is simply a line whose extremal points correspond to $\alpha = 0$ and $\alpha = \pi$. Moreover, it is easy to check that the two extremal points are related by swapping $A_{0}$ with $A_{1}$ and flipping the sign of $B_{2}$ (since the Bell function is symmetric we could alternatively swap $B_{0}$ and $B_{1}$ and flip the sign of $A_{2}$).

We do not know whether the quantum face corresponding to $\vecb_{6}$ is strictly larger than the line, but the existence of such a 1-dimensional region already has interesting implications for self-testing. More concretely, it means that saturating the quantum bound $\beta_{\cQ} = 5$ does not imply the usual self-testing statement, simply because the maximal value can be achieved by multiple inequivalent arrangements of observables.\footnote{In most cases of self-testing it is sufficient to allow for extra degrees of freedom and local isometries (see Appendix~\ref{app:self-testing} for details), but sometimes one must also consider the transposition (complex conjugation) equivalence~\cite{mckague11a, andersson17a}. The transposition equivalence is automatically taken care of if one looks at commutation relations between the local observables~\cite{kaniewski17a}, which immediately implies that the quantum realisations presented in the main text are inequivalent even if we allow for this additional equivalence.} It is, however, still possible that saturating the quantum bound certifies the maximally entangled state of two qubits. If so, this would be an example where the maximal violation certifies the state but not the measurements.
\subsection{The tripartite scenarios}
\label{sec:tripartite-scenarios}
Finally, we discuss two tripartite examples which demonstrate that the geometry of the quantum set becomes even more complex in the multipartite scenarios.

The quantum set for multiple parties has been extensively studied, but mainly in the context of Bell inequalities. From the study of multipartite self-testing we know that certain Bell functions have unique maximisers, e.g.~the Bell function proposed by Mermin~\cite{colbeck06a} (and its generalisation due to Mermin, Ardehali, Belinskii and Klyshko~\cite{kaniewski17a}), but also the Bell functions constructed to self-test the $W$ state~\cite{Pal:2014}. However, we conjecture that this behaviour is not generic and present two Bell functions in the tripartite scenario which give rise to more complex quantum faces. In contrast to the bipartite scenario discussed before, in these cases we can explicitly map out the entire quantum face. The multiple inequivalent ways of saturating the quantum bound immediately imply specific limitations on the self-testing statements we can hope for.

In both examples Alice, Bob and Charlie perform binary measurements. In the first example Alice and Bob have two measurements, whereas Charlie only has one. Consider the Bell function
\begin{equation*}
\vecb_{7} \cdot \vecp := \expec{A_{0} B_{0} C_{0}} + \expec{A_{0} B_{1} C_{0}} + \expec{A_{1} B_{0} C_{0}} - \expec{A_{1} B_{1} C_{0}},
\end{equation*}
which first appeared as Eq.~(15) in Ref.~\cite{Werner:2001}. Note that this is nothing else than the CHSH function between Alice and Bob ``modulated'' by the outcome of Charlie, which immediately implies that $\beta_{\cL} = 2, \beta_{\cQ} = 2 \sqrt{2}$ and $\beta_{\NS} = 4$. The quantum bound is saturated when Alice and Bob perform the optimal CHSH strategy while Charlie deterministically outputs $0$, which leads to\footnote{It is well known that in a tripartite scenario with two outcomes per site the probability point is uniquely determined by the local marginals and the two- and three-body expectation values, see e.g.~Ref.~\cite{Pironio:2011}.}
\begin{gather}
\expec{ A_{x} } = \expec{ B_{y} } = 0, \; \expec{ C_{0} } = 1,\nonumber\\
\label{eq:tripartite-extremal-point-1}
\expec{A_{x} B_{y}} = \expec{A_{x} B_{y} C_{0}} = (-1)^{xy} / \sqrt{2},\\
\expec{A_{x} C_{0}} = \expec{B_{y} C_{0}} = 0.\nonumber
\end{gather}
Alternatively, the quantum bound may be saturated if Alice and Bob achieve the CHSH value of $- 2 \sqrt{2}$, while Charlie deterministically outputs $1$, which leads to
\begin{gather}
\expec{ A_{x} } = \expec{ B_{y} } = 0, \; \expec{ C_{0} } = -1,\nonumber\\
\label{eq:tripartite-extremal-point-2}
- \expec{A_{x} B_{y}} = \expec{A_{x} B_{y} C_{0}} = (-1)^{xy} / \sqrt{2},\\
\expec{A_{x} C_{0}} = \expec{B_{y} C_{0}} = 0.\nonumber
\end{gather}
Since Charlie always performs the same measurement, for the purpose of computing the resulting statistics we can assume that he is only classically correlated with Alice and Bob. Conditioned on a particular output of Charlie the statistics on Alice and Bob are unique, as they must achieve the CHSH value of $+2 \sqrt{2}$ or $-2 \sqrt{2}$. This implies that we always end up with a convex combinations of statistics given in Eqs.~\eqref{eq:tripartite-extremal-point-1} and \eqref{eq:tripartite-extremal-point-2}, i.e.~that the resulting quantum face is simply a line.

Any point on this line can be realised using a three-qubit state shared among Alice, Bob and Charlie. In fact, it suffices to look at a single arrangement of qubit observables
\begin{equation}
\label{eq:b7-observables}
\begin{array}{ll}
A_{0} = \sigma_{x}, & A_{1} = \sigma_{z},\\
B_{0} = (\sigma_{x} + \sigma_{z})/\sqrt{2}, & B_{1} = (\sigma_{x} - \sigma_{z})/\sqrt{2},\\
C_{0} = \sigma_{z}. &
\end{array}
\end{equation}
The largest eigenvalue of the resulting Bell operator equals $\lambda = 2 \sqrt{2}$ and the corresponding eigenspace is 2-dimensional and spanned by vectors $\{ \ket{ \Phi_{+} }_{AB} \ket{0}_{C}, \ket{ \Psi_{-} }_{AB} \ket{1}_{C} \}$, where $\ket{\Phi_{+}} = (\ket{00} + \ket{11})/\sqrt{2}$ and $\ket{\Psi_{-}} = (\ket{01} - \ket{10})/\sqrt{2}$. Therefore, the quantum bound is saturated by any state of the form
\begin{equation*}
\ket{\eta}_{ABC} := \cos \theta \, \ket{ \Phi_{+} }_{AB} \ket{0}_{C} + \sin \theta \, \ket{ \Psi_{-} }_{AB} \ket{1}_{C}
\end{equation*}
for $\theta \in [0, \pi/2]$. In fact, since Charlie always measures in the computational basis, the same statistics could be obtained from the mixed state\footnote{The same mixed state was recently used by Krisnanda et al.~to demonstrate that quantum systems can become entangled even if they interact only through a mediator which remains classical (diagonal in a fixed basis) at all times~\cite{krisnanda17a}.}
\begin{align*}
\rho_{ABC} = &\cos^{2} \theta \, \ketbraq{\Phi_{+}}_{AB} \otimes \ketbraq{0}_{C}\\
&+ \sin^{2} \theta \, \ketbraq{ \Psi_{-} }_{AB} \otimes \ketbraq{1}_{C},
\end{align*}
which clearly results in a convex combination of the two extremal points.

It is instructive to consider what kind of self-testing statements we can hope for in this case. Grouping Bob and Charlie brings us back to the CHSH scenario (in the sense that Bob and Charlie together have only two distinct measurement settings), so there must be a maximally entangled two-qubit state in the bipartition Alice vs.~Bob and Charlie, but we do not know exactly how the entanglement is split between Bob and Charlie. At the extremal points given by Eqs.~\eqref{eq:tripartite-extremal-point-1} and \eqref{eq:tripartite-extremal-point-2} the reduced statistics on Alice and Bob saturate the quantum bound of some CHSH function, which ensures that the relevant entanglement is shared between Alice and Bob only. In the interior of the line, however, we cannot make such precise statements. In particular, while all the interior points can be realised using genuinely tripartite entanglement, such entanglement can never be certified in this setup, simply because the entire line can be written as a convex combination of the extremal points (which can be achieved using bipartite entanglement between Alice and Bob).

In the second example there are two measurements on each site, i.e.~we are in the \scen[3]{2}{2} scenario. Consider the Bell function
\begin{equation*}
\vecb_{8} \cdot \vecp := \expec{A_{0} B_{0} C_{0}} + \expec{A_{0} B_{1} C_{1}} + \expec{A_{1} B_{0} C_{0}} - \expec{A_{1} B_{1} C_{1}}
\end{equation*}
for which $\beta_{\cL} = 2, \beta_{\cQ} = 2 \sqrt{2}$ and $\beta_{\NS} = 4$. This Bell function was found by Werner and Wolf while characterising the facets of the correlation polytope in the \scen[3]{2}{2} scenario~\cite{Werner:2001}, but as shown in Ref.~\cite{Sliwa:2003} it is also a facet Bell inequality of the full local polytope. We show in Appendix~\ref{app:tripartite-faces} that the corresponding quantum face is the convex hull of 8 discrete points and a one-parameter family of quantum points arising from the tripartite Greenberger-Horne-Zeilinger (GHZ) state~\cite{greenberger89a}.

The 8 points, denoted by $\{ \vecp_{j} \}_{j = 1}^{8}$, are achieved when Alice saturates some variant of the CHSH function with either Bob or Charlie, while the remaining party adopts a deterministic strategy. The one-parameter family corresponds to Bob and Charlie nontrivially ``sharing'' the maximal CHSH violation.

The first two points correspond to Charlie always producing the same outcome regardless of his input. The resulting statistics are analogous to those in Eqs.~\eqref{eq:tripartite-extremal-point-1} and~\eqref{eq:tripartite-extremal-point-2}:
\begin{equation*}
\begin{array}{lll}
&& \expec{ A_{x} } = \expec{ B_{y} } = 0, \; \expec{ C_{z} } = 1,\\
\vecp_{1}: &&\expec{A_{x} B_{y}} = \expec{A_{x} B_{y} C_{z}} = (-1)^{xy} / \sqrt{2},\\
&& \expec{A_{x} C_{z}} = \expec{B_{y} C_{z}} = 0
\end{array}
\end{equation*}
and
\begin{equation*}
\begin{array}{lll}
&& \expec{ A_{x} } = \expec{ B_{y} } = 0, \; \expec{ C_{z} } = - 1,\\
\vecp_{2}: &&- \expec{A_{x} B_{y}} = \expec{A_{x} B_{y} C_{z}} = (-1)^{xy} / \sqrt{2},\\
&& \expec{A_{x} C_{z}} = \expec{B_{y} C_{z}} = 0.
\end{array}
\end{equation*}
Points $\vecp_{3}$ and $\vecp_{4}$ arise if Charlie's outcome depends on his input, which implies that Alice and Bob must saturate another variant of the CHSH inequality. The resulting statistics are:
\begin{equation*}
\begin{array}{lll}
&& \expec{ A_{x} } = \expec{ B_{y} } = 0,\\
&&\expec{ C_{0} } = 1, \; \expec{ C_{1} } = -1,\\
\vecp_{3}: &&\expec{A_{x} B_{y}} = (-1)^{ (x + 1) y } / \sqrt{2},\\
&&\expec{A_{x} B_{y} C_{z}} = (-1)^{ (x + 1) y + z } / \sqrt{2},\\
&& \expec{A_{x} C_{z}} = \expec{B_{y} C_{z}} = 0
\end{array}
\end{equation*}
and
\begin{equation*}
\begin{array}{lll}
&& \expec{ A_{x} } = \expec{ B_{y} } = 0,\\
&&\expec{ C_{0} } = -1, \; \expec{ C_{1} } = 1,\\
\vecp_{4}: &&\expec{A_{x} B_{y}} = (-1)^{ (x + 1) y + 1 } / \sqrt{2},\\
&&\expec{A_{x} B_{y} C_{z}} = (-1)^{ (x + 1) y + z } / \sqrt{2},\\
&& \expec{A_{x} C_{z}} = \expec{B_{y} C_{z}} = 0.
\end{array}
\end{equation*}
Points $\{ P_{j} \}_{j = 5}^{8}$ are constructed from $\{ P_{j} \}_{j = 1}^{4}$ by exchanging the roles of Bob and Charlie.

The one-parameter family of facial points has vanishing one- and two-body expectation values
\begin{equation*}
\expec{ A_{x} }= \expec{ B_{y} } = \expec{ C_{z} } = \expec{ A_{x} B_{y} } = \expec{ A_{x} C_{z} } = \expec{ B_{y} C_{z} } = 0,
\end{equation*}
while the three-body correlations are given by:
\begin{equation}
\label{eq:one-parameter-family}
\begin{split}
\expec{ A_{0} B_{0} C_{0} } &= \expec{ A_{0} B_{1} C_{1} } = \frac{1}{\sqrt{2}},\\
\expec{ A_{0} B_{0} C_{1} } &= \expec{ A_{0} B_{1} C_{0} } = \cos \alpha,\\
\expec{ A_{1} B_{0} C_{0} } &= - \expec{ A_{1} B_{1} C_{1} } = \frac{1}{\sqrt{2}},\\
\expec{ A_{1} B_{0} C_{1} } &= - \expec{ A_{1} B_{1} C_{0} } = \sin \alpha
\end{split}
\end{equation}
for $\alpha \in [0, 2 \pi]$.

This example is important because we can explicitly compute the corresponding quantum face and we see that it is a highly non-trivial object. We conjecture that in multipartite scenarios such high-dimensional and non-polytopic quantum faces are a common phenomenon.
\section{Conclusions and open questions}
\label{sec:conclusions}
In this work we have studied the geometry of the quantum set. In particular, we have identified several flat regions lying on the boundary of the quantum set and we have found extremal points which are not exposed. We have also introduced a classification of Bell functions in terms of the facial structure they give rise to and provided an explicit example for each existing class. Finally, we have presented a simple example of a bipartite Bell function whose quantum and classical values differ for which the quantum maximiser is not unique.

Despite the progress we have made on understanding the geometry of the quantum set in the \scen{2}{2} scenario, several questions remain open. For instance, having found a 1-dimensional flat boundary region containing the CHSH point, one could ask whether it is possible to find a higher-dimensional region of that kind or, more generally, what is the highest dimension of a flat region containing the CHSH point. Let us also put forward the following conjecture about the uniqueness of the maximiser: from our numerics it seems that all Bell functions in the \scen{2}{2} scenario have at most 1 extremal nonlocal maximiser. Can one find an analytical proof of this statement?

Another interesting task would be to study the extremal points of the quantum set in the \scen{2}{2} scenario. We know that all of them can be achieved by projective measurements on a two-qubit state, but we know that the latter is a strict superset of the former. This is particularly interesting from the self-testing point of view: we know that if the marginals are uniform, then all the extremal nonlocal points are self-tests. Is this also true for correlation points with arbitrary marginals? In other words, are all extremal nonlocal points of the quantum set in the \scen{2}{2} scenario self-tests?

Another natural question arising from our results concerns the ``generic'' geometry of the quantum set. In this work we provide several examples of unexpected geometric features of the quantum set, but in order to see them one has to go beyond the standard, well-studied Bell functions. Therefore, the question is whether such features are indeed ``unusual'' or our intuition has simply been skewed by looking only at ``regular'' Bell functions for which these behaviours do not appear. We suspect that such features are indeed unusual, but we currently have no rigorous evidence to support this claim.
\begin{acknowledgements}
We would like to thank Eliahu Cohen for showing us the elegant form of the Tsirelson-Landau-Masanes criterion, Laura Man\v{c}inska for bringing to our attention Ref.~\cite{Ramanathan:2016} and Denis Rosset and Antonios Varvitsiotis for useful discussions.

This research is supported by the Singapore Ministry of Education Academic Research Fund Tier 3 (grant no.~MOE2012-T3-1-009), the National Research Fund and the Ministry of Education, Singapore, under the Research Centres of Excellence programme, the John Templeton Foundation project ``Many-box locality as a physical principle'' (grant no.~60607), the European Union's Horizon 2020 research and innovation programme under the Marie Sk{\l}odowska-Curie Action ROSETTA (grant no.~749316), the European Research Council (grant no.~337603), the Danish Council for Independent Research (Sapere Aude), the VILLUM FONDEN via the QMATH Centre of Excellence (grant no.~10059), the National Research, Development and Innovation Office NKFIH (grant nos.~K111734 and KH125096), the Ministry of Education of Taiwan, R.O.C., through ``Aiming for the Top University Project'' granted to the National Cheng Kung University (NCKU), the Ministry of Science and Technology of Taiwan, R.O.C.~(grant no.~104-2112-M-006-021-MY3), and in part by Perimeter Institute for Theoretical Physics. Research at Perimeter Institute is supported by the Government of Canada through the Department of Innovation, Science and Economic Development Canada and by the Province of Ontario through the Ministry of Research, Innovation and Science. 
\end{acknowledgements}
\setlength{\bibsep}{2pt plus 1pt minus 2pt}
\bibliographystyle{apsrev4-1}
\nocite{apsrev41Control}
\bibliography{geometry.bib}
\appendix
\section{Convex sets}
\label{app:convex-sets}
In this appendix we introduce standard notions and definitions used in convex geometry. For more details, we refer the reader to Chapters 1 and 8 of Ref.~\cite{Simon:2011}.

Let $\cA$ be a convex subset of $\amsbb{R}^{d}$ and, moreover, suppose that $\cA$ is compact (i.e.~closed and bounded). For an arbitrary vector $\vecg \in \amsbb{R}^{d}$ let
\begin{equation*}
c(\vecg) := \max_{\vecu \in \cA} \vecg \cdot \vecu
\end{equation*}
and note that the hyperplane $\{ \vecu \in \amsbb{R}^{d} : \vecu \cdot \vecg = c(\vecg) \}$ is a \textbf{supporting hyperplane}, i.e.~it has a non-empty intersection with $\cA$ and it divides the space into two half-spaces such that $\cA$ is fully contained in one of them. The vector $\vecg$ represents a linear functional acting on $\amsbb{R}^{d}$. It is well known that every convex set can be described as an intersection of half-spaces (possibly infinite). Supporting hyperplanes help us to understand the boundary of the convex set. For an arbitrary functional $\vecg$ the set of points which maximise $\vecg$
\begin{equation*}
\cF(\vecg) := \{ \vecu \in \cA : \vecg \cdot \vecu = c(\vecg) \}
\end{equation*}
is called an \textbf{exposed face} of $\cA$ and since $\cA$ is compact, the face is always non-empty. An exposed face is called \textbf{proper} if $\cF(\vecg) \subsetneq \cA$.

A point $\vecu \in \cA$ is called a \textbf{boundary point} if it belongs to some proper exposed face and we denote the set of boundary points by $\cAbnd$. The set of interior points of $\cA$ is simply the complement of $\cAbnd$ (in $\cA$).

Some boundary points have the property that they cannot be written as a non-trivial convex combination of other points in the set. Such points are called \textbf{extremal} and we denote the set of extremal points by $\cAext$. The Krein-Milman theorem states that any convex compact set (in a finite-dimensional vector space) is equal to the convex hull of its extremal points
\begin{equation*}
\cA = \conv( \cAext ).
\end{equation*}
Therefore, when maximising a linear functional over the set, it suffices to perform the optimisation over its extremal points. In other words, for all $\vecg$ we have
\begin{equation*}
\max_{\vecg \in \cA} \vecg \cdot \vecu = \max_{\vecg \in \cAext} \vecg \cdot \vecu.
\end{equation*}
Knowing the extremal points of $\cA$ is also sufficient to determine its faces. Since a face is a convex compact set, it is equal to the convex hull of its extremal points and the extremal points of the face must also be extremal points of $\cA$. For exposed faces we have
\begin{equation*}
\cF(\vecg) = \conv \big( \{ \vecu \in \cAext : \vecg \cdot \vecu = c(\vecg) \} \big).
\end{equation*}

Among extremal points there are points which can be identified as \emph{unique} maximisers of some linear functional. We say that $\vecu$ is \textbf{exposed} if there exists a linear functional $\vecg$ such that
\begin{equation*}
\cF(\vecg) = \{ \vecu \}
\end{equation*}
and we denote the set of exposed points by $\cAexp$. From the definitions alone, we immediately establish the inclusions
\begin{equation*}
\cAexp \subseteq \cAext \subseteq \cAbnd \subseteq \cA
\end{equation*}
and it is well known that all of them are in general strict. However, it is worth pointing out that by Straszewicz's theorem in a finite-dimensional vector space the set of exposed points is dense in the set of extremal points~\citep[Theorem 3]{Basu:2011}. In other words extremal but non-exposed points should be regarded as exceptional. For a polytope the set of extremal and exposed points coincide, as they are simply the vertices of the polytope.
\section{Precise definition of the quantum set}
\label{app:quantum-set}
Here we give a precise definition of the quantum set using the notions introduced in Section~\ref{sec:quantum-set} (quantum state and local quantum measurements). We also show why the description becomes significantly simpler in any $\scen[n]{2}{2}$ scenario.

Let $\cQ_{d}$ be the set of all probability points which can be realised using systems of local dimension $d$. Since both the set of states and the set of measurements of fixed (local) dimension are compact and the trace is a continuous map, all these sets are closed, i.e.~for all $d \in \amsbb{N}$ we have
\begin{equation*}
\clos( \cQ_{d} ) = \cQ_{d}.
\end{equation*}
We then define the set $\cQfin$ as the infinite union
\begin{equation*}
\cQfin := \bigcup_{d \in \amsbb{N}} \cQ_{d}
\end{equation*}
and the quantum set $\cQ$ as the closure
\begin{equation*}
\cQ := \clos ( \cQfin ).
\end{equation*}
While $\cQ_{d}$ is not necessarily convex, the union $\cQfin$ is and so is the quantum set $\cQ$. Since $\cQ$ is bounded (all the components of the probability vector must belong to the interval $[0, 1]$) and closed (by definition), it is a compact set.

While in general we might have to consider quantum systems of arbitrary large dimensions, Jordan's lemma simplifies the problem in the \scen[n]{2}{2} scenario. Jordan's lemma states that any two (Hermitian) projectors $P$ and $Q$ can be simultaneously block-diagonalised such that the blocks are of size at most $2 \times 2$~\cite{Masanes:2006}. This implies that in any scenario with two binary measurements on each site for any $d \in \amsbb{N}$ we have
\begin{equation*}
\cQ_{d} \subseteq \conv( \cQ_{2} ),
\end{equation*}
which immediately implies
\begin{equation*}
\cQfin \subseteq \conv( \cQ_{2} ).
\end{equation*}
If we start with the inclusion relation
\begin{equation*}
\cQ_{2} \subseteq \cQfin \subseteq \conv( \cQ_{2} )
\end{equation*}
and take convex hulls recalling that $\conv( \cQfin ) = \cQfin$, we arrive at
\begin{equation*}
\cQfin = \conv( \cQ_{2} ).
\end{equation*}
Since $\cQ_{2}$ is closed and the convex hull of a closed set is still closed, we finally obtain
\begin{equation*}
\cQ = \conv( \cQ_{2} ).
\end{equation*}
Let us mention that this observation (for the special case of $n = 2$) was already made by Tsirelson in 1980~\cite{Tsirelson:1980}.
\section{Self-testing of quantum systems}
\label{app:self-testing}
In this appendix we give a formal definition of self-testing and prove a relation between self-testing and extremality.

Let $\vecp \in \cQfin$ be a quantum probability point. A quantum realisation of $\vecp$ consists of Hilbert spaces $\sH_{A}$ and $\sH_{B}$, a state $\rho_{AB}$ acting on $\sH_{A} \otimes \sH_{B}$ and local measurements $\{ E_{a}^{x} \}$ and $\{ F_{b}^{y} \}$ acting on $\sH_{A}$ and $\sH_{B}$, respectively, such that
\begin{equation*}
P(ab|xy) = \tr \big[ ( E_{a}^{x} \otimes F_{b}^{y} ) \rho_{AB} \big]
\end{equation*}
for all $a, b, x, y$. We denote this quantum realisation by {\realreal}.

It turns out that for certain quantum probability points all quantum realisations are closely related. This is conveniently formulated by finding a single realisation from which all other realisations can be generated and we will call such a realisation \emph{canonical}. In the standard formulation of self-testing one can never certify that the state is mixed or that the measurements are non-projective (every point in the quantum set can be obtained by performing projective measurements on a pure state). Therefore, the canonical realisation always consists of projective measurements acting on a pure state. Moreover, we embed it in local Hilbert spaces whose dimension is equal to the rank of the reduced state, which ensures that the reduced density matrices are full-rank. We denote the canonical realisation by {\idealreal}. We use the standard definition of self-testing (see e.g.~Ref.~\cite{McKague:2012}), but we formulate it at the level of density matrices.
\begin{df}
A quantum probability point $\vecp \in \cQfin$ self-tests the canonical quantum realisation {\idealreal} if for every realisation of $\vecp$, denoted by {\realreal}, we can find:
\begin{itemize}
\item Hilbert spaces $\sH_{A''}$ and $\sH_{B''}$,
\item local isometries
\begin{align*}
V_{A} &: \sH_{A} \to \sH_{A'} \otimes \sH_{A''},\\
V_{B} &: \sH_{B} \to \sH_{B'} \otimes \sH_{B''},
\end{align*}
\item an auxiliary quantum state $\sigma_{A''B''}$ acting on $\sH_{A''} \otimes \sH_{B''}$
\end{itemize}
such that for $V := V_{A} \otimes V_{B}$ we have
\begin{equation}
\label{eq:self-testing-measurements}
V \big[ ( E_{a}^{x} \otimes F_{b}^{y} ) \rho_{AB} \big] V^{\dagger} = \big[ ( P_{a}^{x} \otimes Q_{b}^{y} ) \Psi_{A'B'} \big] \otimes \sigma_{A''B''}
\end{equation}
for all $a, b, x, y$.
\end{df}
This equality ensures that applying the real measurement operators to the real state is equivalent to applying the ideal measurements to the ideal state. Moreover, summing over $a$ and $b$ (for any fixed $x$ and $y$) immediately gives
\begin{equation}
\label{eq:self-testing-state}
V \rho_{AB} V^{\dagger} = \Psi_{A'B'} \otimes \sigma_{A''B''},
\end{equation}
which means that by applying local isometries one can find the ideal state $\Psi_{A'B'}$ inside the real state $\rho_{AB}$.

Let us start with the following simple observation.
\begin{obs}
\label{obs:extremality-rank-1-operators}
Let $R_{GH}^{0}, R_{GH}^{1}$ be positive semidefinite operators acting on $\sH_{G} \otimes \sH_{H}$ such that
\begin{equation}
\label{eq:convex-combination}
R_{GH}^{0} + R_{GH}^{1} = S_{G} \otimes T_{H},
\end{equation}
where $S_{G}$ and $T_{H}$ are positive semidefinite operators acting on $\sH_{G}$ and $\sH_{H}$, respectively. If $\rank(S_{G}) = 1$, then the operators $R_{GH}^{0}$ and $R_{GH}^{1}$ must be of the form
\begin{equation*}
R_{GH}^{j} = S_{G} \otimes T_{H}^{j}
\end{equation*}
for some positive semidefinite $T_{H}^{j}$ acting on $\sH_{H}$.
\end{obs}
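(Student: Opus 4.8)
The plan is to convert the single identity \eqref{eq:convex-combination} into an operator inequality and then into a statement about supports. Since $R_{GH}^{1}\ge 0$, equation \eqref{eq:convex-combination} gives $R_{GH}^{0}=S_{G}\otimes T_{H}-R_{GH}^{1}\le S_{G}\otimes T_{H}$, while $R_{GH}^{0}\ge 0$ by hypothesis; the same two bounds hold with the roles of $0$ and $1$ exchanged, so $0\le R_{GH}^{j}\le S_{G}\otimes T_{H}$ for $j=0,1$. I would then invoke the elementary fact that $0\le A\le B$ forces $\operatorname{supp}(A)\subseteq\operatorname{supp}(B)$: if $\ket{v}\in\ker B$ then $0\le\bramatketq{v}{A}\le\bramatketq{v}{B}=0$, hence $A^{1/2}\ket{v}=0$ and $\ket{v}\in\ker A$. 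Because $\rank(S_{G})=1$ we may write $S_{G}=\lambda\,\ketbraq{e}$ for some $\lambda>0$ and a unit vector $\ket{e}\in\sH_{G}$, so that $\operatorname{supp}(S_{G}\otimes T_{H})=\operatorname{span}\{\ket{e}\}\otimes\operatorname{supp}(T_{H})\subseteq\cK$, where $\cK:=\operatorname{span}\{\ket{e}\}\otimes\sH_{H}$. The first step therefore yields $\operatorname{supp}(R_{GH}^{j})\subseteq\cK$ for $j=0,1$.

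The second step is a routine sandwiching. Let $\Pi:=\ketbraq{e}\otimes\mathbb{1}_{H}$ be the orthogonal projector onto $\cK$, with $\mathbb{1}_{H}$ the identity on $\sH_{H}$. Since each $R_{GH}^{j}$ is Hermitian with range contained in $\cK$, we have $R_{GH}^{j}=\Pi R_{GH}^{j}\Pi$; expanding this and using $\ketbraq{e}=S_{G}/\lambda$ gives $R_{GH}^{j}=S_{G}\otimes T_{H}^{j}$ with $T_{H}^{j}:=\tfrac{1}{\lambda}(\bra{e}\otimes\mathbb{1}_{H})\,R_{GH}^{j}\,(\ket{e}\otimes\mathbb{1}_{H})$. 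This $T_{H}^{j}$ acts on $\sH_{H}$ and is positive semidefinite because $\bramatketq{\chi}{T_{H}^{j}}=\tfrac{1}{\lambda}(\bra{e}\otimes\bra{\chi})\,R_{GH}^{j}\,(\ket{e}\otimes\ket{\chi})\ge 0$ for every $\ket{\chi}\in\sH_{H}$, which is exactly the asserted form. As a consistency check, summing the two equalities over $j$ and comparing with \eqref{eq:convex-combination} forces $T_{H}^{0}+T_{H}^{1}=T_{H}$.

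I do not expect any genuine obstacle. The hypothesis $\rank(S_{G})=1$ is used solely to make the $G$-factor of $\operatorname{supp}(S_{G}\otimes T_{H})$ one-dimensional (for higher-rank $S_{G}$ the operators $R_{GH}^{j}$ need not be of product form at all), and the only step that requires a moment's thought is the implication $0\le R_{GH}^{j}\le S_{G}\otimes T_{H}\;\Rightarrow\;\operatorname{supp}(R_{GH}^{j})\subseteq\cK$, which the support-monotonicity fact stated above settles at once. Everything remaining is elementary bookkeeping with tensor products.
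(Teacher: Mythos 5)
Your proof is correct and fully self-contained. The paper's argument reaches the same conclusion by a somewhat different intermediate route: it first disposes of the degenerate case $\tr(T_H)=0$, then traces out $\sH_H$ in Eq.~\eqref{eq:convex-combination} to conclude that each $G$-marginal $\tr_H(R_{GH}^j)$ is a nonnegative multiple of the rank-1 operator $S_G$, and finally invokes (without proof, as ``easy to check'') the fact that a bipartite positive semidefinite operator whose one-party marginal is proportional to a rank-1 projector must itself be a product. Your version replaces the marginal computation with the operator inequality $0\le R_{GH}^j\le S_G\otimes T_H$ and the elementary support-monotonicity fact, arriving directly at $\operatorname{supp}(R_{GH}^j)\subseteq\operatorname{span}\{\ket{e}\}\otimes\sH_H$; the sandwiching by $\Pi=\ketbraq{e}\otimes\mathbb{1}_H$ that follows is then identical in spirit to what the paper leaves implicit. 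What your route buys is uniformity (the $T_H=0$ case needs no separate treatment, since the support argument collapses it automatically) and the fact that every step is spelled out; what the paper's route buys is conceptual brevity, since ``rank-1 marginal $\Rightarrow$ product'' is a standard quantum-information fact that a reader is likely to accept on sight. Both proofs turn on the same core observation — that positivity confines the support of $R_{GH}^j$ to $\operatorname{span}\{\ket{e}\}\otimes\sH_H$ — they just extract it by different elementary lemmas.
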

\begin{proof}
If $\tr(T_{H}) = 0$, we must have $T_{H} = 0$, which immediately implies $R_{GH}^{j} = 0$, i.e.~we can set $T_{H}^{j} = 0$. If $\tr(T_{H}) > 0$, tracing out $H$ in Eq.~\eqref{eq:convex-combination} implies that $R_{G}^{j} = \alpha_{j} S_{G}$ for some $\alpha_{j} \geq 0$. It is easy to check that a bipartite positive semidefinite operator whose marginal is proportional to a rank-1 projector must be a product operator.
\end{proof}
Now, we are ready to state and prove the main result of this appendix.
\begin{prop}
\label{prop:self-tests-extremal}
If a quantum probability point $\vecp \in \cQfin$ self-tests the canonical realisation {\idealreal}, then it must be an extremal point of $\cQfin$.
\end{prop}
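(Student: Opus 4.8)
The plan is to argue by contradiction: suppose $\vecp$ is \emph{not} extremal in $\cQfin$, so that $\vecp = \tfrac12 (\vecp_{0} + \vecp_{1})$ for two distinct points $\vecp_{0}, \vecp_{1} \in \cQfin$. I would first assemble all three points into a single quantum realisation. Concretely, take realisations of $\vecp_{0}$ and $\vecp_{1}$ with Hilbert spaces $\sH_{A}^{(j)}, \sH_{B}^{(j)}$, states $\rho_{AB}^{(j)}$ and measurements $\{ (E_{a}^{x})^{(j)} \}, \{ (F_{b}^{y})^{(j)} \}$, and combine them via a direct sum: set $\sH_{A} := \sH_{A}^{(0)} \oplus \sH_{A}^{(1)}$ and likewise for $B$, let $\rho_{AB} := \tfrac12 \rho_{AB}^{(0)} \oplus \tfrac12 \rho_{AB}^{(1)}$, and let $E_{a}^{x} := (E_{a}^{x})^{(0)} \oplus (E_{a}^{x})^{(1)}$ (and similarly $F_{b}^{y}$). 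One checks these are legitimate measurements and state, and that $\tr[(E_{a}^{x} \otimes F_{b}^{y}) \rho_{AB}] = \tfrac12 P_{0}(ab|xy) + \tfrac12 P_{1}(ab|xy) = P(ab|xy)$, so this is a genuine realisation of $\vecp$.

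Now apply the self-testing hypothesis to this realisation. We obtain local isometries $V_{A}, V_{B}$ and an auxiliary state $\sigma_{A''B''}$ such that, with $V = V_{A} \otimes V_{B}$,
\begin{equation*}
V \big[ ( E_{a}^{x} \otimes F_{b}^{y} ) \rho_{AB} \big] V^{\dagger} = \big[ ( P_{a}^{x} \otimes Q_{b}^{y} ) \Psi_{A'B'} \big] \otimes \sigma_{A''B''}
\end{equation*}
for all $a, b, x, y$, and summing over outcomes, $V \rho_{AB} V^{\dagger} = \Psi_{A'B'} \otimes \sigma_{A''B''}$. The key point is that $\rho_{AB}$ decomposes as a sum of two pieces supported on orthogonal subspaces, $\rho_{AB} = \tfrac12 \rho_{AB}^{(0)} \oplus \tfrac12 \rho_{AB}^{(1)}$, and likewise each measurement operator respects this block structure; so each term $(E_{a}^{x} \otimes F_{b}^{y}) \rho_{AB}$ splits as a direct sum of its two blocks. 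Pushing through the isometry $V$, we get a corresponding splitting $(P_{a}^{x} \otimes Q_{b}^{y}) \Psi_{A'B'} \otimes \sigma_{A''B''} = R_{0} + R_{1}$ with $R_{j}$ the image of the $j$-th block and $R_{0}, R_{1}$ positive semidefinite. Taking $a = b = x = y$ summed out, we get $\Psi_{A'B'} \otimes \sigma_{A''B''} = S_{0} + S_{1}$ where $S_{j}$ is the image of $\tfrac12 \rho_{AB}^{(j)}$; and since $\Psi_{A'B'}$ is pure (rank $1$), Observation~\ref{obs:extremality-rank-1-operators} forces each $S_{j}$ to be of product form $\Psi_{A'B'} \otimes \sigma_{A''B''}^{(j)}$ with $\sigma_{A''B''}^{(0)} + \sigma_{A''B''}^{(1)} = \sigma_{A''B''}$.

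Applying the same reasoning to each operator $(E_{a}^{x} \otimes F_{b}^{y}) \rho_{AB}$ (which is also a bipartite positive operator whose reduced operator on the $A'B'$ side, after isometry, is controlled by the rank-$1$ $\Psi_{A'B'}$), we conclude that the $j$-th block maps to $( P_{a}^{x} \otimes Q_{b}^{y} ) \Psi_{A'B'} \otimes \sigma_{A''B''}^{(j)}$. Taking the trace and using that the isometry preserves trace, we recover $P_{j}(ab|xy) = \tr[(P_{a}^{x} \otimes Q_{b}^{y})\Psi_{A'B'}] \cdot \tr(\sigma_{A''B''}^{(j)})$. Since $\vecp_{j}$ is normalised, summing over $a,b$ gives $\tr(\sigma_{A''B''}^{(j)}) = 1$ for both $j$, hence $\vecp_{0} = \vecp_{1} = \vecp$, contradicting the assumed non-triviality of the convex decomposition. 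Therefore $\vecp$ is extremal in $\cQfin$.

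The main obstacle I anticipate is making the block-decomposition argument fully rigorous at the level of the isometry: one must verify that the images under $V$ of the two orthogonal blocks of each operator $(E_{a}^{x}\otimes F_{b}^{y})\rho_{AB}$ are themselves positive semidefinite and that Observation~\ref{obs:extremality-rank-1-operators} genuinely applies to each of them (not just to the summed-out state), so that the auxiliary state ``inherits'' the splitting coherently across all choices of $a,b,x,y$. A clean way to handle this is to note that $\rho_{AB}^{1/2}$ also respects the block structure and write $(E_{a}^{x}\otimes F_{b}^{y})\rho_{AB} = \rho_{AB}^{1/2}\,(E_{a}^{x}\otimes F_{b}^{y})\,\rho_{AB}^{1/2}$ up to a similarity that doesn't affect the trace, keeping everything manifestly positive; alternatively, one can work directly with the purification of $\rho_{AB}$ and track how the Schmidt structure interacts with $V$.
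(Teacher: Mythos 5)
Your overall strategy coincides with the paper's: build a block-diagonal realisation of $\vecp$ from realisations of $\vecp_{0}$ and $\vecp_{1}$, feed it to the self-testing hypothesis, and use Observation~\ref{obs:extremality-rank-1-operators} on the state equation $V \rho_{AB} V^{\dagger} = \Psi_{A'B'} \otimes \sigma_{A''B''}$ to split the auxiliary state as $\sigma_{A''B''} = \sigma_{A''B''}^{0} + \sigma_{A''B''}^{1}$, with the $j$-th block mapping to $\Psi_{A'B'} \otimes \sigma_{A''B''}^{j}$. Up to that point the argument is correct.

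The gap is in your final step, where you ``apply the same reasoning'' to each operator $(E_{a}^{x} \otimes F_{b}^{y}) \rho_{AB}$. Observation~\ref{obs:extremality-rank-1-operators} requires the two blocks $R_{GH}^{0}, R_{GH}^{1}$ to be positive semidefinite and their sum to be a tensor product of positive semidefinite operators with a rank-one factor. For the measurement-applied operators none of this holds: $(E_{a}^{x} \otimes F_{b}^{y}) \rho_{AB}$ is a product of two non-commuting positive operators and is in general not even Hermitian, and $(P_{a}^{x} \otimes Q_{b}^{y}) \Psi_{A'B'}$ is not a rank-one positive operator, so the Observation simply does not apply. Your proposed repair via $\rho_{AB}^{1/2} (E_{a}^{x} \otimes F_{b}^{y}) \rho_{AB}^{1/2}$ does not come for free either: the self-testing hypothesis is, by definition, a statement about $(E_{a}^{x} \otimes F_{b}^{y}) \rho_{AB}$, and it does not immediately yield the corresponding identity for the symmetrised operator ($V^{\dagger} V = \mathbb{1}$ but $V V^{\dagger}$ is only a projector, so one cannot conjugate $\rho_{AB}^{1/2}$ through the isometry). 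The paper closes this step by a different device, which you would need to supply: choose the realisations of $\vecp_{0}, \vecp_{1}$ with \emph{full-rank} reduced states (a choice your write-up omits but which is essential here). Then the local consequence $V_{A} \rho_{A}^{j} V_{A}^{\dagger} = \Psi_{A'} \otimes \sigma_{A''}^{j}$ of the state equation forces the images of the local block projectors to have the form $\Pi_{A}^{j} = \mathbb{1}_{A'} \otimes \Pi_{A''}^{j}$ (and similarly for Bob). Applying $\Pi_{A}^{j} \otimes \Pi_{B}^{j}$ to both sides of the self-testing identity for $(E_{a}^{x} \otimes F_{b}^{y}) \rho_{AB}$ and taking the trace isolates $q_{j} \tr \big[ (E_{a}^{x} \otimes F_{b}^{y}) \rho_{AB}^{j} \big]$ on the left and yields $q_{j} \tr \big[ (P_{a}^{x} \otimes Q_{b}^{y}) \Psi_{A'B'} \big]$ on the right, whence $\vecp_{j} = \vecp$ without ever invoking positivity of the measurement-applied operators.
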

\begin{proof}
We show that $\vecp$ cannot be written as a non-trivial convex combination of points in $\cQfin$. More specifically, we show that if $\vecp$ is a self-test and can be written as
\begin{equation}
\label{eq:convex-combination-probabilities}
\vecp = q_{0} \vecp_{0} + q_{1} \vecp_{1}
\end{equation}
for $q_{0}, q_{1} \in (0, 1), q_{0} + q_{1} = 1$ and $\vecp_{0}, \vecp_{1} \in \cQfin$, then we must necessarily have $\vecp_{0} = \vecp_{1} = \vecp$.

Since $\vecp_{j} \in \cQfin$, it has a finite-dimensional realisation on $\sH_{A_{j}}$ and $\sH_{B_{j}}$ given by
\begin{equation*}
\big( \sH_{A_{j}}, \sH_{B_{j}}, \rho_{ A_{j} B_{j} }^{j}, \{ E_{a}^{x, j} \}, \{ F_{b}^{y, j} \} \big)
\end{equation*}
and we choose a realisation in which the reduced states are full-rank, i.e.~$\rank( \rho_{A_{j}} ) = \dim( \sH_{A_{j}})$ and $\rank( \rho_{B_{j}} ) = \dim( \sH_{B_{j}})$. Clearly, the convex combination given in Eq.~\eqref{eq:convex-combination-probabilities} can be realised on $\sH_{A} := \sH_{A_{0}} \oplus \sH_{A_{1}}$ and $\sH_{B} := \sH_{B_{0}} \oplus \sH_{B_{1}}$. Writing out $\sH_{A} \otimes \sH_{B}$ as a direct sum gives
\begin{align*}
\sH_{A} \otimes \sH_{B} &= ( \sH_{A_{0}} \oplus \sH_{A_{1}} ) \otimes ( \sH_{B_{0}} \oplus \sH_{B_{1}} )\\
&= \sH_{A_{0} B_{0}} \oplus \sH_{A_{0} B_{1}} \oplus \sH_{A_{1} B_{0}} \oplus \sH_{A_{1} B_{1}}.
\end{align*}
We embed $\rho_{A_{0} B_{0}}^{0}$ and $\rho_{A_{1} B_{1}}^{1}$ as
\begin{align*}
\rho_{AB}^{0} &:= \rho_{A_{0} B_{0}}^{0} \oplus 0_{ A_{0} B_{1} } \oplus 0_{ A_{1} B_{0} } \oplus 0_{A_{1} B_{1}},\\
\rho_{AB}^{1} &:= 0_{A_{0} B_{0}} \oplus 0_{ A_{0} B_{1} } \oplus 0_{ A_{1} B_{0} } \oplus \rho_{A_{1} B_{1}}^{1}
\end{align*}
and the overall state is given by
\begin{equation*}
\rho_{AB} := q_{0} \rho_{AB}^{0} + q_{1} \rho_{A B}^{1}.
\end{equation*}
The measurement operators are given by
\begin{align*}
E_{a}^{x} &:= ( E_{a}^{x, 0} )_{A_{0}} \oplus ( E_{a}^{x, 1} )_{A_{1}},\\
F_{b}^{y} &:= ( F_{b}^{y, 0} )_{B_{0}} \oplus ( F_{b}^{y, 1} )_{B_{1}}.
\end{align*}
Since {\realreal} is a quantum realisation of $\vecp$ and $\vecp$ self-tests the canonical realisation {\idealreal}, there exist Hilbert spaces $\sH_{A''}, \sH_{B''}$, local isometries
\begin{align*}
V_{A} &: \sH_{A} \to \sH_{A'} \otimes \sH_{A''},\\
V_{B} &: \sH_{B} \to \sH_{B'} \otimes \sH_{B''}
\end{align*}
and an auxiliary state $\sigma_{A''B''}$ such that
\begin{equation*}
V \rho_{AB} V^{\dagger} = \Psi_{A'B'} \otimes \sigma_{A''B''},
\end{equation*}
where $V = V_{A} \otimes V_{B}$ is the combined isometry. If we write out the sum
\begin{equation}
\label{eq:self-testing-sum}
q_{0} V \rho_{AB}^{0} V^{\dagger} + q_{1} V \rho_{AB}^{1} V^{\dagger} = \Psi_{A'B'} \otimes \sigma_{A''B''}
\end{equation}
we obtain an equality to which Observation~\ref{obs:extremality-rank-1-operators} can be applied. To see that all the conditions are satisfied we identify
\begin{align*}
\sH_{A'} \otimes \sH_{B'} &\leftrightarrow \sH_{G},\\
\sH_{A''} \otimes \sH_{B''} &\leftrightarrow \sH_{H},\\
q_{j} V \rho_{AB}^{j} V^{\dagger} &\leftrightarrow R_{GH}^{j},\\
\Psi_{A'B'} &\leftrightarrow S_{G},\\
\sigma_{A''B''} &\leftrightarrow T_{H},
\end{align*}
which allows us to conclude that
\begin{equation}
\label{eq:self-testing-j}
q_{j} V \rho_{AB}^{j} V^{\dagger} = \Psi_{A'B'} \otimes q_{j} \sigma_{A''B''}^{j}
\end{equation}
for some normalised states $\sigma_{A''B''}^{j}$. Tracing out Bob's part of the state and dividing through by $q_{j}$ (recall that $q_{j} > 0$) leads to
\begin{equation}
\label{eq:local-equality}
V_{A} \rho_{A}^{j} V_{A}^{\dagger} = \Psi_{A'} \otimes \sigma_{A''}^{j}.
\end{equation}
Since the quantum realisations of $\vecp_{1}$ and $\vecp_{2}$ and the canonical realisation are locally full-rank, the projectors on the supports of the reduced states are given by
\begin{align*}
\rho_{A}^{0} &\to \mathbb{1}_{A_{0}} \oplus 0_{A_{1}},\\
\rho_{A}^{1} &\to 0_{A_{0}} \oplus \mathbb{1}_{A_{1}},\\
\Psi_{A'} &\to \mathbb{1}_{A'}.
\end{align*}
Equation~\eqref{eq:local-equality} implies that the supports of both sides coincide, i.e.
\begin{align*}
\Pi_{A}^{0} := V_{A} ( \mathbb{1}_{A_{0}} \oplus 0_{A_{1}} ) V_{A}^{\dagger} &= \mathbb{1}_{A'} \otimes \Pi_{A''}^{0},\\
\Pi_{A}^{1} := V_{A} ( 0_{A_{0}} \oplus \mathbb{1}_{A_{1}} ) V_{A}^{\dagger} &= \mathbb{1}_{A'} \otimes \Pi_{A''}^{1},
\end{align*}
where $\Pi_{A''}^{j}$ is the projector on the support of $\sigma_{A''}^{j}$. Similarly, for Bob we obtain
\begin{align*}
\Pi_{B}^{0} := V_{B} ( \mathbb{1}_{B_{0}} \oplus 0_{B_{1}} ) V_{B}^{\dagger} &= \mathbb{1}_{B'} \otimes \Pi_{B''}^{0},\\
\Pi_{B}^{1} := V_{B} ( 0_{B_{0}} \oplus \mathbb{1}_{B_{1}} ) V_{B}^{\dagger} &= \mathbb{1}_{B'} \otimes \Pi_{B''}^{1}.
\end{align*}
By applying the projector $ \Pi_{A}^{j} \otimes \Pi_{B}^{j} $ to both sides of Eq.~\eqref{eq:self-testing-sum} and taking the trace we obtain
\begin{equation*}
q_{j} = \tr \big[ ( \Pi_{A''}^{j} \otimes \Pi_{B''}^{j} ) \sigma_{A''B''} \big].
\end{equation*}
The self-testing condition~\eqref{eq:self-testing-measurements} states that
\begin{equation*}
V \big[ ( E_{a}^{x} \otimes F_{b}^{y} ) \rho_{AB} \big] V^{\dagger} = \big[ ( P_{a}^{x} \otimes Q_{b}^{y} ) \Psi_{A'B'} \big] \otimes \sigma_{A''B''}.
\end{equation*}
Applying the projector $\Pi_{A}^{j} \otimes \Pi_{B}^{j}$ to both sides and tracing out gives
\begin{align*}
q_{j} \tr \big[ ( E_{a}^{x} &\otimes F_{b}^{y} ) \rho_{AB}^{j} \big]\\
&= \tr \big[ ( P_{a}^{x} \otimes Q_{b}^{y} ) \Psi_{A'B'} \big] \cdot \tr \big[ ( \Pi_{A''}^{j} \otimes \Pi_{B''}^{j} ) \sigma_{A''B''} \big]\\
&= q_{j} \tr \big[ ( P_{a}^{x} \otimes Q_{b}^{y} ) \Psi_{A'B'} \big],
\end{align*}
which immediately implies that $\vecp_{j} = \vecp$.
\end{proof}
Since in the \scen{2}{2} scenario we have $\cQ = \cQfin$, this result is sufficient for our purposes. To prove a stronger result in which extremality in $\cQfin$ is replaced by extremality in $\cQ$, one needs a slightly stronger promise, namely that the self-testing property is \emph{robust} (i.e.~that all probability points lying sufficiently close to $\vecp$ ``approximately'' self-test the canonical realisation). We leave this more general statement as an open problem for future work.
\section{Class (3a) does not appear in the \scen{2}{2} scenario}
\label{app:no-3a-in-2222}
We show here that in the $\scen{2}{2}$ scenario the equality $\beta_{\cQ} = \beta_{\NS}$ implies $\beta_{\cL} = \beta_{\cQ} = \beta_{\NS}$. Note that a similar result has been proven in a more general scenario (binary outcomes but an arbitrary number of settings), but for a special family of Bell functions (see Theorem 5.12 of Ref.~\cite{cleve04a}).

For an arbitrary Bell function consider the vertices of the no-signalling polytope which saturate the no-signalling bound $\beta_{\NS}$. If any one of them is local, we immediately have $\beta_{\cL} = \beta_{\NS}$, so we can without loss of generality assume that they are all nonlocal. If the bound is saturated by a single nonlocal vertex, then the quantum bound must be strictly smaller $\beta_{\cQ} < \beta_{\NS}$, because the PR box lies outside of the quantum set. On the other hand, if the bound is saturated by two (or more) nonlocal vertices, we must have $\beta_{\cL} = \beta_{\NS}$. This is because in the \scen{2}{2} scenario the convex hull of any two nonlocal vertices of the no-signalling set always contains a local point (in fact, it suffices to mix the two vertices with equal weights).
\section{The CHSH violation vs.~distance measures}
\label{app:chsh-violation-distance-measure}
In this appendix we show that in the CHSH scenario various visibilities are simple functions of the CHSH violation $\beta$. It is known that any no-signalling point in the CHSH scenario can violate at most one CHSH inequality (see the last paragraph of the supplementary material of Ref.~\cite{Liang:2010}), so the violation is well-defined.

Since the local set is invariant under the relabelling of inputs and outputs, we can without loss of generality assume that it is the standard CHSH inequality, cf.~Eq.~\eqref{eq:b1}, that is violated. Our results rely crucially on the following result proved by Bierhorst~\cite{bierhorst16a}.
\begin{prop}
\label{prop:convex-decomposition}
Let $\vecp \in \NS$ be a no-signalling point which violates the CHSH inequality, i.e.~$\beta > 2$. Then $\vecp$ can be written as
\begin{equation}
\label{eq:2222-convex-decomposition}
\vecp = v_{0} \PPR + \sum_{j = 1}^{8} v_{j} \vecp_{j},
\end{equation}
where $v_{j} \geq 0$, $\sum_{j} v_{j} = 1$ and $\vecp_{j}$ correspond to the 8 deterministic points which give the CHSH value of 2. Moreover, $v_{0} = (\beta - 2)/2$.
\end{prop}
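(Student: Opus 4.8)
The plan is to reduce the statement to a single geometric fact: that $\vecp$ lies in the polytope
$\cP := \conv\big( \{ \PPR \} \cup \{ \vecp_{1}, \dots, \vecp_{8} \} \big)$,
where $\vecp_{1},\dots,\vecp_{8}$ are the eight deterministic points saturating the standard CHSH bound $\vecb_{1}\cdot\vecp = 2$. Granting $\vecp \in \cP$, write $\vecp = v_{0}\PPR + \sum_{j=1}^{8} v_{j}\vecp_{j}$ with $v_{0},v_{j}\ge 0$ and $v_{0}+\sum_{j}v_{j}=1$; applying the CHSH functional $\vecb_{1}$, which equals $4$ on $\PPR$ and $2$ on each $\vecp_{j}$, gives $\beta = 4v_{0} + 2(1-v_{0}) = 2 + 2v_{0}$, hence $v_{0} = (\beta-2)/2$. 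So only the inclusion $\vecp\in\cP$ needs proof.

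To get it, I would determine the facet structure of $\cP$. I claim $\cP$ is a full-dimensional simplex in the $8$-dimensional affine hull of $\NS$, with nine facets: (i) the hyperplane $\vecb_{1}\cdot\vecp = 2$, which is the base opposite the apex $\PPR$ (and coincides with the CHSH facet $\cF_{\cL}(\vecb_{1})$ of the local polytope, since the only vertices of $\cL$ saturating $\vecb_{1}$ are the $\vecp_{j}$); and (ii) for each $j$, a single positivity hyperplane $P(a_{j}b_{j}|x_{j}y_{j}) = 0$, namely the lateral facet opposite $\vecp_{j}$. The combinatorial core here is a bijection between the eight deterministic points $\vecp_{j}$ and the eight cells $(ab|xy)$ at which $\PPR$ vanishes: a direct finite check -- parametrise each $\vecp_{j}$ by the signs $\langle A_{x}\rangle, \langle B_{y}\rangle \in \{\pm 1\}$ subject to giving CHSH value $2$, and compare its four nonzero cells with the eight zero-cells of $\PPR$ -- shows that every $\vecp_{j}$ is nonzero at exactly one zero-cell of $\PPR$, and that distinct $\vecp_{j}$ select distinct cells. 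Consequently the positivity hyperplane attached to the cell selected by $\vecp_{j}$ contains $\PPR$ together with all $\vecp_{k}$, $k\neq j$, but excludes $\vecp_{j}$; this both shows that the nine points are affinely independent (so $\cP$ is genuinely a simplex, and its base is the $7$-simplex $\cF_{\cL}(\vecb_{1})$) and identifies these hyperplanes as its lateral facets.

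With the facets in hand the conclusion is immediate. Since $\vecp\in\NS$, it satisfies the normalisation and no-signalling equalities (so it lies in the affine hull of $\cP$) and all sixteen positivity inequalities -- in particular the eight that cut out the lateral facets of $\cP$ -- while the hypothesis $\beta > 2$ supplies the remaining facet inequality $\vecb_{1}\cdot\vecp \ge 2$. Thus $\vecp$ obeys all nine facet inequalities of the simplex $\cP$, so $\vecp\in\cP$, and the coefficient computation above closes the argument. The one genuinely laborious step is the finite verification of the $\vecp_{j}\leftrightarrow$cell bijection in (ii); I expect this bookkeeping, rather than any conceptual difficulty, to be the main work. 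It is worth noting that this route does not use the single-inequality-violation property of Ref.~\cite{Liang:2010}, which is needed only to guarantee that the CHSH violation $\beta$ is unambiguously defined in the first place.
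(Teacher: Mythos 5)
The paper does not actually supply its own proof of this proposition: it states the result and cites Bierhorst~\cite{bierhorst16a}, then uses it as a black box. So your proposal is not being compared against an in-paper argument but should be judged on its own.

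Your argument is correct and self-contained. The reduction to showing $\vecp\in\cP$ is clean, and the identification of $\cP$ as an $8$-simplex whose lateral facets are positivity hyperplanes is the right geometric picture. I checked the crucial combinatorial step: a deterministic point with $\langle A_x\rangle,\langle B_y\rangle\in\{\pm1\}$ achieving CHSH value $2$ must have exactly three of the signed correlators $\langle A_0B_0\rangle,\langle A_0B_1\rangle,\langle A_1B_0\rangle,-\langle A_1B_1\rangle$ equal to $+1$ and one equal to $-1$; the setting pair where the ``wrong'' sign occurs is exactly the one setting pair at which the deterministic point's single nonzero outcome lands in a zero-cell of $\PPR$. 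Splitting by which of the four terms is $-1$ gives four cases, each realised by two sign choices, and within a case the two deterministic points land on the two distinct zero-cells of $\PPR$ at that setting pair ($\{01,10\}$ for the three aligned settings, $\{00,11\}$ for $xy=11$). That establishes the bijection and hence both the affine independence of the nine vertices and the facet description. From there the inclusion $\vecp\in\cP$ follows because $\vecp$, being no-signalling, already lies in the common affine hull and satisfies the eight relevant positivity inequalities, while $\beta>2$ supplies $\vecb_1\cdot\vecp\geq 2$; the coefficient computation then forces $v_0=(\beta-2)/2$. Your closing remark is also correct: the single-inequality-violation property of Ref.~\cite{Liang:2010} is only needed to fix which of the eight CHSH functionals $\vecb_1$ is, not in the decomposition argument itself. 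In short, this is a valid independent proof by explicit facet enumeration of the simplex $\cP$, replacing the external citation.
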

For a no-signalling behaviour we define the \emph{visibility} against noise coming from the set $\cS$ as
\begin{equation*}
v_{\cS}(\vecp) := \inf \{ \lambda \in [0, 1] : (1 - \lambda) \vecp + \lambda \Pnoise \in \cL \},
\end{equation*}
where $\Pnoise \in \cS$. The three cases of interest are (i) visibility against  white noise $\cS = \{ \vecp_{0} \}$, (ii) visibility against local noise $\cS = \cL$ and (iii) visibility against no-signalling noise $\cS = \NS$ and the results are
\begin{equation}
\label{Eq:VisCHSHExplicit}
\begin{split}
v_{ \{ \vecp_{0} \} }(\vecp) &= \frac{ \beta - 2 }{ \beta } \, ,\\
v_{\cL}(\vecp) &= \frac{ \beta - 2 }{ \beta + 2 } \, ,\\
v_{\NS}(\vecp) &= \frac{ \beta - 2 }{ \beta + 4 } \, .
\end{split}
\end{equation}
These relations follow almost immediately from Proposition~\ref{prop:convex-decomposition}. Writing $\vecp$ in the convex decomposition~\eqref{eq:2222-convex-decomposition} and adding noise leads to
\begin{equation*}
(1 - \lambda) v_{0} \PPR + (1 - \lambda) \sum_{j = 1}^{8} v_{j} \vecp_{j} + \lambda \Pnoise.
\end{equation*}
Requiring that the CHSH value of the resulting point does not exceed the classical value of $2$ is equivalent to
\begin{equation*}
\lambda ( \beta - \zeta ) \geq \beta - 2,
\end{equation*}
where $\zeta$ is the CHSH value of $\Pnoise$. As the right-hand side is strictly positive, we must have $\beta - \zeta > 0$, which allows us to rewrite this lower bound as
\begin{equation*}
\lambda \geq \frac{\beta - 2}{ \beta - \zeta }.
\end{equation*}
For white, local and no-signalling noise we have $\zeta = 0$, $\zeta \geq -2$ and $\zeta \geq -4$, respectively, thereby showing that the right-hand-side of Eq.~\eqref{Eq:VisCHSHExplicit} is a legitimate lower bound on the visibilities. To see that this amount of noise is also sufficient choose $\Pnoise = \vecp_{0}$, $\Pnoise = (\vecp_{0} + \PPRk{2})/2$ and $\Pnoise = \PPRk{2}$, respectively.

The fact that the CHSH violation can be interpreted as a measure of distance from the local set might be useful in guiding us towards finding new robust self-tests. Suppose we would like to find a Bell inequality which self-tests a specific partially entangled state of two-qubits in the \scen{2}{2} scenario. Intuitively, we would like the probability point saturating this inequality to lie as far as possible from the local set. This reduces the problem to finding the maximal CHSH violation achievable using the fixed two-qubit state. It is worth pointing out that the self-tests for partially entangled two-qubit states based on the tilted CHSH inequality~\cite{yang13a, Bamps:2015} satisfy this property.
\section{Identifying and certifying flat boundary regions}
\label{app:identifying-certifying}
To the best of our knowledge the only rigorous method to certify the presence of a flat region on the boundary of the quantum set is to find a Bell function whose quantum value is saturated by distinct probability points. Since finding the quantum value of a Bell function is a well-studied problem, let us focus solely on the problem of identifying the relevant Bell function.

We are not aware of any systematic method of finding flat boundary regions of the quantum set. Instead, one has to start with some guesses and in our case they are predominantly of two types:
\begin{enumerate}
\item[(i)] Two Bell functions: we are given two Bell functions and we suspect that their maximal quantum values saturate some linear tradeoff.
\item[(ii)] A set of points: we are given a set of points and we suspect that they all lie on the same quantum face.
\end{enumerate}
In the next two sections we discuss how to handle cases (i) and (ii), respectively.
\subsection{Making a projection plot}
\label{app:bell-tradeoff}
We are given two Bell functions $\vecb_{1}$ and $\vecb_{2}$ and we suspect that they saturate a linear tradeoff. To confirm this we should produce a projection plot (similar to Fig.~\ref{fig:pchsh-pd-proj}) and check for flat boundary regions. There is no exact method of performing such a projection, but we can compute an outer approximation using the Navascu{\'e}s-Pironio-Ac{\'i}n hierarchy and an inner approximation by providing explicit quantum realisations. Finding good inner approximations is particularly feasible in any \scen[n]{2}{2} scenario, since we know that all extremal points of the quantum set can be achieved by performing projective measurements on an $n$-qubit state. We use the fact that projecting a convex compact set is equivalent to projecting its extremal points and then taking the convex hull.\footnote{Note that this is not true for slices: it is in general not sufficient to take the convex hull of the extremal points in the slice.} For instance in the \scen[2]{2}{2} scenario we can fix the Schmidt basis to be the computational basis, i.e.~assume that the bipartite state is of the form $\ket{\psi} = \cos{\theta} \, \ket{00} + \sin{ \theta } \, \ket{11}$ for some $\theta \in [0, \pi/4 ]$. A rank-1 projective observable corresponds to a unit vector on the Bloch sphere, which is specified by two independent parameters. For two observables on each side this gives 9 parameters in total. Therefore, to generate the inner approximation of the projected quantum set we must solve a series of non-linear optimisation problem in $9$ real variables. Such optimisation problems can be solved numerically using standard numerical packages, but we are never guaranteed to converge to the global optimum. However, we have found that repeating the optimisation with random starting points usually yields the correct answer.

Having identified a projection which contains a flat line on the boundary, we should look at its extremal points and find probability distributions that project down to these points. These need not be unique, but it suffices to find one for each of the endpoints. It is easy to see that a line connecting these two probability distributions lies on the boundary of the quantum set.
\subsection{Finding the Bell function}
Let $\{ \vecp_{j} \}_{j}$ be the points which we suspect to belong to the same quantum face. We are looking for a Bell function $\vecb$ whose quantum value is saturated by the points $\{ \vecp_{j} \}_{j}$, i.e.~we require that for all $j$
\begin{equation}
\label{eq:bell-value-constraint}
\vecb \cdot \vecp_{j} = 1
\end{equation}
and $\beta_{\cQ}(\vecb) = 1$.

The most primitive approach is to generate candidate Bell functions satisfying constraints~\eqref{eq:bell-value-constraint} and compute their quantum value. The tool used for computing the quantum value should ideally yield an analytic expression, as one can never distinguish between flat and almost-flat regions using numerical values. If all probability points exhibit a certain symmetry, we might also impose that symmetry on $\vecb$.

This method can be refined by looking at the neighbourhood of the given probability points. Given a quantum realisation of $\vecp_{1}$ it is easy to find some neighbouring quantum points (e.g.~applying rotations to the observables and/or the state). In the limit of infinitesimal change this will give us a family of tangent vectors $V_{k}$ such that $\vecp_{1} + \delta V_{k} \in \cQ$ for sufficiently small $\delta$. Clearly, we must have
\begin{equation}
\vecb \cdot \vec{V}_{k} = 0,
\end{equation}
which can significantly reduce the search space.
\section{Additional examples of quantum faces}
\subsection{A Bell function with $\cF_{\cL} = \cF_{\cQ} = \cF_{\NS}$}
\label{app:type-4d}
We give here another example of a Bell function whose local, quantum and no-signalling faces coincide, but in contrast to the examples given in Eq.~\eqref{eq:exposing-family} the maximiser is not unique. Consider the Bell function:
\begin{equation*}
    \vecb_{7} :=
    \begin{array}{c|c|c}
        & 0 &  0 \\ \hline
        0  &  1 &  1 \\ \hline
        0 &  1 & 1
    \end{array} \; , \;
    \vecb_{7} \cdot \vecp \leq
    \begin{cases}
        4 & \cL\\
        \bm{4} & \cQ\\
        4 & \NS
    \end{cases}.
\end{equation*}
It is straightforward to verify that the only extremal no-signalling points which saturate this inequality are $\Pdet{1}$ and $\Pdet{2}$ (specified in Eqs.~\eqref{eq:Pdet1} and~\eqref{eq:Pdet2}, respectively), i.e.~that the resulting face is a line. Since both of these points are local we immediately deduce that $\cF_{\cL} = \cF_{\cQ} = \cF_{\NS}$.
\subsection{Quantum faces containing the Hardy point}
\label{app:positivity}
Let us start by presenting a quantum face of maximal dimension. Writing the non-negativity of $P(11|11)$ as a Bell inequality gives
\begin{equation}
\label{eq:b8}
    \vecb_{8} :=
    \begin{array}{c|c|r}
        & 0 &  1 \\ \hline
        0  &  0 & 0 \\ \hline
        1  &  0 & -1
    \end{array} \; , \;
    \vecb_{8} \cdot \vecp \leq
    \begin{cases}
        1 & \cL\\
        \bm{1} & \cQ\\
        1 & \NS
    \end{cases}.
\end{equation}
It is easy to verify that the corresponding face of the local set is a (positivity) facet, i.e.~a face of maximal dimension. This immediately implies (by inequalities~\eqref{eq:dimQlb} and \eqref{eq:dimQub}) that the resulting quantum and no-signalling faces are also of maximal dimension.

While the three faces have the same dimension, they are all different, i.e.~the inclusions $\cF_{\cL} \subsetneq \cF_{\cQ} \subsetneq \cF_{\NS} $ are strict. To see this observe that the function is saturated by the Hardy point
\begin{equation}
\label{eq:PHardy}
    \PHardy :=
    \begin{array}{r|r|r}
        & 5-2\sqrt{5} & \sqrt{5} - 2  \\ \hline
        5-2\sqrt{5} & 6\sqrt{5}-13 & 3\sqrt{5} - 6\\ \hline
        \sqrt{5} - 2 &  3\sqrt{5} - 6& 2\sqrt{5}-5
    \end{array} \; ,
\end{equation}
which is quantum (but nonlocal) and also by the (non-quantum) PR box $\PPR$.

As shown in Ref.~\cite{Rabelo:2012}, the Hardy point $\PHardy$ is a self-test and, hence, an extremal point of the quantum set. However, as seen in Fig.~\ref{fig:hardy} and proved in Appendix~\ref{app:hardy-not-exposed} it is not exposed.

It is easy to check that $\PHardy$ saturates two other positivity facets: $P(01|10) \geq 0$ and $P(10|01) \geq 0$. Thus, $\PHardy$ must also saturate
\begin{align}
\label{eq:inequality-HasP}
    a_{1} P(10|01) + a_{2} P(01|10) + a_{3} P(11|11) \geq 0
\end{align}
for arbitrary $a_{1}, a_{2}, a_{3} \geq 0$, which can be written in terms of expectation values as
\begin{equation*}
\begin{split}
    \vecb_{9} :=
    &\begin{array}{c|c|c}
        & a_2 &  a_3-a_1 \\ \hline
        a_1 &  0 & a_1 \\ \hline
        a_3-a_2 & a_2 & -a_3
    \end{array} \; ,\\
    \vecb_{9} \cdot \vecp \leq&
    \begin{cases}
        a_1 + a_2 + a_3 & \cL\\
        \bm{a_1+a_2+a_3} & \cQ\\
        a_1+a_2+a_3 & \NS
    \end{cases} \; .
\end{split}	
\end{equation*}
For $a_{1}, a_{2}, a_{3} > 0$ the Bell function $\vecb_{9}$ is saturated only by points which simultaneously saturate three positivity facets, corresponding to the three terms in Eq.~\eqref{eq:inequality-HasP}. Each of the faces identified by Eq.~\eqref{eq:inequality-HasP} is thus at most 5-dimensional, corresponding to the \emph{intersection} of three 7-dimensional (positivity) faces.

It is easy to check that $\vecb_{9}$ is saturated by 5 local points, namely $\Pdet{1}$ of Eq.~\eqref{eq:Pdet1} and
\begin{equation*}
\begin{split}
    &\Pdet{5} :=
    \begin{array}{r|r|r}
        & \phantom{-}1 &  -1\\ \hline
        \phantom{-}1 &  1 & -1\\ \hline
        1 &  1 & -1
    \end{array} \; , \quad
    \Pdet{6} := \begin{array}{r|r|r}
        &  1 & 1\\ \hline
        1 & 1 & 1\\ \hline
        -1 & -1 & -1
    \end{array} \; ,\\
    &\Pdet{7} :=
    \begin{array}{r|r|r}
        & 1 &  -1\\ \hline
        -1 &  -1 &  1\\ \hline
        1 &  1 &  -1
    \end{array} \; ,\quad
    \Pdet{8} :=
    \begin{array}{r|r|r}
        &  -1 &  1\\ \hline
        1 &  -1 & 1\\ \hline
        -1 & 1 & -1
    \end{array} \; .
\end{split}	
\end{equation*}
The local face is the convex hull of these 5 points and, since they are affinely independent, we obtain a 4-dimensional polytope. The Bell function $\vecb_{9}$ is also saturated by the PR box, which implies that the no-signalling face is a 5-dimensional polytope. The quantum face corresponding to the Bell function $\vecb_{9}$ contains the 5 deterministic points and the Hardy point $\PHardy$, so it must be of dimension 5. We do not know, however, whether it is a polytope or not.
\begin{figure}[h!t]
    \includegraphics[width=9cm]{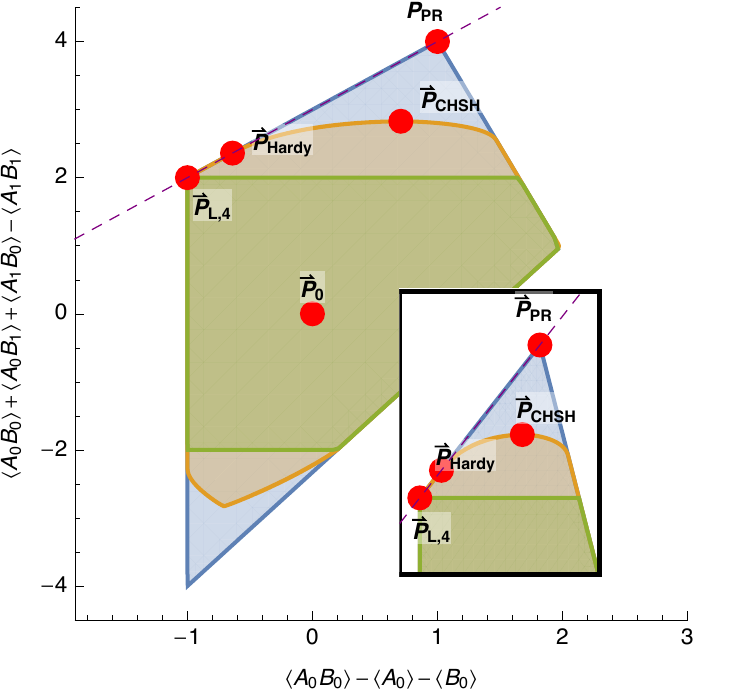}
    \caption{A slice of the quantum set containing the maximally mixed point $\vecp_{0}$, the PR box $\PPR$ and the Hardy point $\PHardy$. The dashed line corresponds to saturating the Bell function $\vecb_{9}$. The point $\PL{4}$ is defined as the intersection of the line going through the PR box and the Hardy point and the hyperplane of the CHSH value of $2$. This local point is not deterministic, but it has a unique decomposition in terms of the five deterministic strategies, namely $\PL{4} = \frac{ 1 + 2 \sqrt{5} }{19} \Pdet{1} + \frac{ 9 - \sqrt{5} }{38} ( \Pdet{5} + \Pdet{6} + \Pdet{7} + \Pdet{8} )$.}
    \label{fig:hardy}
\end{figure}
\section{The Hardy point is not exposed}
\label{app:hardy-not-exposed}
To prove that the Hardy point $\PHardy$ defined in Eq.~\eqref{eq:PHardy} is not exposed in the quantum set we show that any Bell function $\vecb$ maximised by the Hardy point satisfies $\beta_{\cL}(\vecb) = \beta_{\cQ}(\vecb)$, which implies that the Hardy point is not the unique maximiser.

Rabelo et al.~showed that the maximal violation of the Hardy paradox self-tests the following two-qubit state and measurements~\cite{Rabelo:2012}:
\begin{align*}
\ket{\psiH} &= \sqrt{\frac{ 1 - a^{2} }{2}} \, \big( \ket{01} + \ket{10} \big) + a \ket{11},\\
A_{0} &= B_{0} = 2 a \cdot \sigma_{x} + \sqrt{ 1 - 4 a^{2} } \cdot \sigma_{z},\\
A_{1} & = B_{1} = -\sigma_{z}
\end{align*}
for $a := \sqrt{ \sqrt{5} - 2 }$ . Define operators
\begin{equation*}
\begin{array}{ll}
G_{1} = A_{0} \otimes \mathbb{1}, & G_{5} = A_{0} \otimes B_{0},\\
G_{2} = A_{1} \otimes \mathbb{1}, & G_{6} = A_{0} \otimes B_{1},\\
G_{3} = \mathbb{1} \otimes B_{0}, & G_{7} = A_{1} \otimes B_{0},\\
G_{4} = \mathbb{1} \otimes B_{1}, & G_{8} = A_{1} \otimes B_{1}.
\end{array}
\end{equation*}
Let $\vecb$ be an arbitrary Bell function
\begin{equation*}
\vecb :=
    \begin{array}{c|c|c}
        & b_{3} &  b_{4} \\ \hline
        b_{1} &  b_{5} &  b_{6} \\ \hline
        b_{2} &  b_{7} & b_{8}
    \end{array}
\end{equation*}
and the corresponding Bell operator equals
\begin{equation*}
W = \sum_{j = 1}^{8} b_{j} G_{j}.
\end{equation*}
If the Bell function $\vecb$ is maximised by the Hardy point, then in particular the state $\ket{\psiH}$ must be an eigenstate of the Bell operator $W$, i.e.~it must satisfy the linear constraint $W \ket{\psiH} = \lambda \ket{\psiH} $. This forces the Bell function to be tangent to the boundary of the quantum set at the Hardy point. In the proof we show that \emph{every} such Bell function is maximised by at least two points: $\PHardy$ and a local point.

The eigenvalue equation $ W \ket{\psiH} = \lambda \ket{\psiH} $ implies $\bramatket{00}{W}{\psiH} = 0$, because $\braket{00}{\psiH} = 0$. This can be written as a linear constraint $\vecb \cdot \vec{T} = 0$, where the entries of $\vec{T}$ are proportional to $\bramatket{00}{ G_{j} }{\psiH}$. More specifically, we set
\begin{equation*}
    \vec{T} :=
    \begin{array}{c|c|c}
        & 1 &  0 \\ \hline
        1 & \sqrt{5}-1 &  -1 \\ \hline
        0 &  -1 & 0
    \end{array} \; .
\end{equation*}
Our goal is to find the largest value of $\vecb \cdot \PHardy$ for a Bell function maximised by the Hardy point. To write this as a linear program it is convenient to impose some normalisation, e.g.~that the local bound does not exceed $1$ (this is simply a matter of scaling the coefficients). The resulting linear program reads
\begin{gather*}
\begin{array}{ll}
\max & \vecb \cdot \PHardy\\
\textnormal{over} & \vecb \in \amsbb{R}^{8}\\
\textnormal{subject to} & \vecb \cdot \vec{T} = 0\\
& \vecb \cdot \Pdet{j} \leq 1 \nbox[4]{for} j = 1, 2, \ldots, 16,
\end{array}
\end{gather*}
where $\Pdet{j}$ are the deterministic behaviours.\footnote{The deterministic behaviours for $j = \{1, 2, \ldots, 8\}$ were defined in Sections~\ref{sec:b2},~\ref{sec:b3} and Appendix~\ref{app:positivity}. The remaining ones turn out to be irrelevant and the corresponding constraints could be removed without affecting the value of the problem.} The maximum value of the linear program is found to be identically 1; the optimal Bell function returned by the program is $\vecb_{8}$ as specified in Eq.~\eqref{eq:b8}, which achieves no smaller a value at $\PL{4}$ than it does at $\PHardy$, proving that the Hardy point is not exposed.  

The optimality of $\vecb_{8}$ can be shown analytically as follows. First, note that  $\vecb_{8}$ satisfies the constraints defining the linear program. To further show that $\vecb \cdot \PHardy=1$ is the optimal max-value, we write the dual program
\begin{gather*}
\begin{array}{ll}
\min & \sum_{k = 1}^{16} y_{k}\\
\textnormal{over} & y_{k} \geq 0, \; z \in \amsbb{R}\\
\textnormal{subject to} & \sum_{k = 1}^{16} y_{k} \Pdet{j} + z \vec{T} = \PHardy.
\end{array}
\end{gather*}
The assignment
\begin{align*}
y_{k} &=
\begin{cases}
\sqrt{5} - 2 &\nbox{if} k = 1,\\
( 3 - \sqrt{5} )/2 &\nbox{if} k \in \{5, 6\},\\
0 &\nbox{otherwise,}
\end{cases}\\
z &= 4 - 2 \sqrt{5}
\end{align*}
is a valid solution to the dual and the resulting value is $1$. This completes the proof that any Bell function maximised by the Hardy point must satisfy $\beta_{\cL}(\vecb) = \beta_{\cQ}(\vecb)$.
\section{The quantum value of the $B_{6}$ function}
\label{app:b7-quantum-value}
Let $A_{x}$ and $B_{y}$ denote the observables of Alice and Bob, respectively. The maximal quantum value of the Bell function $\vecb_{6}$ can be determined by finding the maximum eigenvalue of the Bell operator
\begin{equation*}
	W = A_{0} ( B_{0} + B_{1} +B_{2} ) + A_{1} ( B_{0} + B_{1} - B_{2} ) + A_{2} ( B_{0} - B_{1} ),
\end{equation*}
where for ease of presentation we assume that $A_{x}$ and $B_{y}$ act on the same composite Hilbert space while satisfying the commutation relations $[A_{x}, B_{y}]=0$ for all $x, y$.

By solving the semidefinite program proposed in Ref.~\cite{wehner06a}, as illustrated in Ref.~\cite{Doherty2008}, one essentially obtains a sum-of-squares decomposition for the operator $\gamma \mathbb{1} - W$ for the smallest possible $\gamma$, thereby showing that the maximal quantum violation of $B_{6}$ is upper bounded by $\gamma$. In particular, it is easy to verify that whenever $A_{x}^{2} = B_{y}^{2} = \mathbb{1}$ and the above-mentioned commutation relations hold true, the following holds
\begin{equation}
\label{eq:sos}
    5 \cdot \mathbb{1} - W = \frac{1}{2} \sum_{j = 1}^{3} V_{j}^\dagger V_{j},
\end{equation}
where $V_{1} = A_{0} + A_{1} - B_{0} - B_{1}$, $V_{2} = A_{0} - A_{1} - B_{2}$ and $V_{3} = A_{2} - B_{0} + B_{1}$. Since the right-hand side of Eq.~\eqref{eq:sos} is a non-negative operator, we see that 5 must be an upper bound on the maximal quantum value of $B_6$. Indeed, this upper bound is saturated by the family of quantum realisations presented in the main text.
\section{Quantum faces in the tripartite scenarios}
\label{app:tripartite-faces}
In this appendix we derive the quantum face corresponding to the Bell function $\vecb_{8}$ discussed in Section~\ref{sec:tripartite-scenarios}.

As explained in Appendix~\ref{app:convex-sets} to determine an exposed quantum face, it suffices to find its extremal points. Since these must also be extremal in the entire quantum set, we simply need to find the extremal points of the quantum set which saturate the quantum bound. In a scenario with two binary observables on each site, every extremal point can be obtained by performing projective rank-1 measurements on an $n$-qubit state~\cite{Masanes:2006}. For Alice we parametrise the observables as
\begin{equation*}
A_{0} = \sigma_{x}, \quad A_{1} = \cos a \cdot \sigma_{x} + \sin a \cdot \sigma_{y}
\end{equation*}
for some angle $a \in [0, \pi]$, while for Bob and Charlie we write
\begin{align*}
&B_{0} = \cos b \cdot \sigma_{x} + \sin b \cdot \sigma_{y}, \quad B_{1} = \cos b \cdot \sigma_{x} - \sin b \cdot \sigma_{y},\\
&C_{0} = \cos c \cdot \sigma_{x} + \sin c \cdot \sigma_{y}, \quad C_{1} = \cos c \cdot \sigma_{x} - \sin c \cdot \sigma_{y}.
\end{align*}
for some angles $b, c \in [0, \pi/2]$. This parametrisation ensures that the resulting Bell operator is easy to diagonalise.

The Bell operator corresponding to the Bell function $\vecb_{8}$ can be seen as the CHSH operator
\begin{equation*}
W = A_{0} \otimes ( T_{0} + T_{1} ) + A_{1} \otimes ( T_{0} - T_{1} ),
\end{equation*}
where $T_{j} := B_{j} \otimes C_{j}$. This implies that the eigenvalue of $2 \sqrt{2}$ is possible only for $a = \pi/2$. Having established the form of Alice's observables we are ready to find the eigenvectors of $W$. It is easy to check that
\begin{align*}
T_{0} + T_{1} = 2 ( \cos b \, \cos c \cdot \sigma_{x} \otimes \sigma_{x} + \sin b \, \sin c \cdot \sigma_{y} \otimes \sigma_{y} ),\\
T_{0} - T_{1} = 2 ( \cos b \, \sin c \cdot \sigma_{x} \otimes \sigma_{y} + \sin b \, \cos c \cdot \sigma_{y} \otimes \sigma_{x} ),
\end{align*}
which implies that
\begin{align*}
W \ket{000} &= - 2 \sqrt{2} \, \sin (b + c - \pi/4) \, \ket{111},\\
W \ket{001} &= - 2 \sqrt{2} \, \sin (b - c - \pi/4) \, \ket{110},\\
W \ket{010} &= 2 \sqrt{2} \, \sin (b - c + \pi/4) \, \ket{101},\\
W \ket{011} &= 2 \sqrt{2} \, \sin (b + c + \pi/4) \, \ket{100},\\
W \ket{100} &= 2 \sqrt{2} \, \sin (b + c + \pi/4) \, \ket{011},\\
W \ket{101} &= 2 \sqrt{2} \, \sin (b - c + \pi/4) \, \ket{010},\\
W \ket{110} &= -2 \sqrt{2} \, \sin (b - c - \pi/4) \, \ket{001},\\
W \ket{111} &= - 2 \sqrt{2} \, \sin (b + c - \pi/4) \, \ket{000}.
\end{align*}
The eigenvectors of $W$ are simply the following GHZ states:
\begin{align*}
\ket{\Omega_{\pm 1}} &= \frac{1}{\sqrt{2}} \big( \ket{000} \pm \ket{111} \big),\\
\ket{\Omega_{\pm 2}} &= \frac{1}{\sqrt{2}} \big( \ket{001} \pm \ket{110} \big),\\
\ket{\Omega_{\pm 3}} &= \frac{1}{\sqrt{2}} \big( \ket{010} \pm \ket{101} \big),\\
\ket{\Omega_{\pm 4}} &= \frac{1}{\sqrt{2}} \big( \ket{011} \pm \ket{100} \big)
\end{align*}
and the corresponding eigenvalues are:
\begin{align*}
\lambda_{\pm 1} &= \mp 2 \sqrt{2} \, \sin (b + c - \pi/4),\\
\lambda_{\pm 2} &= \mp 2 \sqrt{2} \, \sin (b - c - \pi/4),\\
\lambda_{\pm 3} &= \pm 2 \sqrt{2} \, \sin (b - c + \pi/4),\\
\lambda_{\pm 4} &= \pm 2 \sqrt{2} \, \sin (b + c + \pi/4).
\end{align*}
Since we are restricted to the square $b, c \in [0, \pi/2]$, the maximum eigenvalue of $2 \sqrt{2}$ appears if and only if (at least) one of the following equations is satisfied:
\begin{align*}
b + c &= 3 \pi/4,\\
b - c &= - \pi/4,\\
b - c &= \pi/4,\\
b + c &= \pi/4.
\end{align*}
Inside the square, i.e.~for $b, c \in (0, \pi)$, the maximum eigenvalue is non-degenerate and the corresponding eigenvector is unique. The one- and two-body marginals of a GHZ state are fully mixed, which implies that the one- and two-body expectation values must vanish:
\begin{equation*}
\expec{ A_{x} }= \expec{ B_{y} } = \expec{ C_{z} } = \expec{ A_{x} B_{y} } = \expec{ A_{x} C_{z} } = \expec{ B_{y} C_{z} } = 0.
\end{equation*}
Determining the three-body correlations is a simple exercise. For the branch $b + c = 3 \pi / 4 $ we obtain
\begin{align*}
\expec{ A_{0} B_{0} C_{0} } &= \expec{ A_{0} B_{1} C_{1} } = - \cos (b + c) = \frac{1}{\sqrt{2}},\\
\expec{ A_{0} B_{0} C_{1} } &= \expec{ A_{0} B_{1} C_{0} } = - \cos (b - c),\\
\expec{ A_{1} B_{0} C_{0} } &= - \expec{ A_{1} B_{1} C_{1} } = \sin (b + c) = \frac{1}{\sqrt{2}},\\
\expec{ A_{1} B_{0} C_{1} } &= - \expec{ A_{1} B_{1} C_{0} } = \sin (b - c),
\end{align*}
which corresponds to Eq.~\eqref{eq:one-parameter-family} for $\alpha \in [3 \pi/4, 5 \pi/4]$. The other branches give analogous results and cover the rest of the range.

To complete the analysis we must also look at the four special points where the maximal eigenvalue is degenerate, i.e.~$(b, c) = (\pi/4, 0), (0, \pi/4), (\pi/2, 0)$ and $(0, \pi/2)$. At each of these points the subspace corresponding to the maximal eigenvalue is 2-dimensional and the resulting statistics form a line, i.e.~we obtain two extremal points. Computing the extremal points for each pair $(b, c)$ yields the 8 points $\{ P_{j} \}_{j = 1}^{8}$ presented in the main text.
\end{document}